\documentclass[runningheads]{lipics-v2021}

\bibliographystyle{plainurl}

\date{}
\usepackage{multicol}
\usepackage{latexsym,amsmath,amssymb,stmaryrd,mathtools}
\usepackage{graphicx,epsfig,tikz,color,xcolor,bbding,algorithm2e}
\usepackage{float}
\usepackage{caption}
\usepackage{subcaption}
\usepackage{comment}
\usetikzlibrary{shapes.geometric,positioning,chains,fit,calc}

\newcommand{\Oh}{{\cal O}}

\newcommand{\FST}{{\text{FeST}}}

\newcommand{\LCP}{{\text{LCP}}}
\newcommand{\lex}{{\textrm{lex}}}

\def\pos{\textit{pos}}
\def\node{\textit{node}}
\def\root{\textit{root}}
\def\splay{\textit{splay}}
\def\geomsum{\textit{geomsum}}
\renewcommand{\epsilon}{\varepsilon}

\newcommand{\makestring}{\texttt{make-string}}
\newcommand{\access}{\texttt{access}}
\newcommand{\retrieve}{\texttt{retrieve}}
\newcommand{\ins}{\texttt{insert}}
\newcommand{\del}{\texttt{delete}}
\newcommand{\repl}{\texttt{substitute}}

\newcommand{\introduce}{\texttt{introduce}}
\newcommand{\extract}{\texttt{extract}}
\newcommand{\equal}{\texttt{equal}}
\newcommand{\lcp}{\texttt{lcp}}
\newcommand{\reverse}{\texttt{reverse}}
\newcommand{\map}{\texttt{map}}
\newcommand{\isolate}{\texttt{isolate}}
\newcommand{\find}{\texttt{find}}
\newcommand{\fix}{\texttt{fix}}
\newcommand{\fixm}{\texttt{fixm}}
\newcommand{\ttrotate}{\texttt{rotate}}

\newcommand{\leftc}{.\mathrm{left}}
\newcommand{\rightc}{.\mathrm{right}}
\newcommand{\car}{.\mathrm{char}}
\newcommand{\size}{.\mathrm{size}}
\newcommand{\fp}{.\mathrm{fp}}
\newcommand{\power}{.\mathrm{power}}

\newcommand{\rev}{.\mathrm{rev}}
\newcommand{\mapf}{.\mathrm{map}}
\newcommand{\fprev}{.\mathrm{fprev}}
\newcommand{\mfp}{.\mathrm{mfp}}
\newcommand{\mfprev}{.\mathrm{mfprev}}
\newcommand{\rc}{.\mathrm{rc}}

\newcommand{\Fest}{\textrm{FeST}}

\newcommand{\start}{\mathrm{start}}

\author{Zsuzsanna Lipt\'ak}{Dipartimento di Informatica, University of Verona,
Italy}{zsuzsanna.liptak@univr.it}{https://orcid.org/0000-0002-3233-0691}{Partially funded by the MUR PRIN project Nr.\ 2022YRB97K 'PINC' (Pangenome INformatiCs. From Theory to Applications) and
by the INdAM-GNCS Project CUP$\_$E53C23001670001 (Compressione, indicizzazione, analisi e confronto di dati biologici).}
\author{Francesco Masillo}{Dipartimento di Informatica, University of Verona,
Italy}{francesco.masillo@univr.it}{https://orcid.org/0000-0002-2078-6835}{}
\author{Gonzalo Navarro}{Center for Biotechnology and Bioengineering (CeBiB) \\
Department of Computer Science, University of Chile,
Chile}{gnavarro@dcc.uchile.cl}{https://orcid.org/0000-0002-2286-741X}{Funded by Basal Funds FB0001, Mideplan,
Chile, and Fondecyt Grant 1-230755, Chile.}


\authorrunning{Zs.\ Lipt\'ak, F.\ Masillo, and G.\ Navarro} 
\Copyright{Zsuzsanna Lipt\'ak, Francesco Masillo, and Gonzalo Navarro}
\ccsdesc[500]{Theory of computation~Data structures design and analysis}
\keywords{dynamic strings, splay trees, dynamic data structures, LCP, circular strings}

\category{} 

\relatedversion{} 

\supplement{}



\nolinenumbers 
\hideLIPIcs

\EventLongTitle{}
\EventShortTitle{}
\EventAcronym{}
\EventYear{2024}
\EventDate{2024}
\EventLocation{}
\EventLogo{}


\begin{document}

\title{A Textbook Solution for Dynamic Strings}

\maketitle

\begin{abstract}
We consider the problem of maintaining a collection of strings while efficiently supporting splits and concatenations on them, as well as comparing two substrings, and computing the longest common prefix between two suffixes. 
This problem can be solved in optimal time $\Oh(\log N)$ whp for the updates and $\Oh(1)$ worst-case time for the queries, where $N$ is the total collection size [Gawrychowski et al., SODA 2018]. We present here a much simpler solution based on a forest of enhanced splay trees (\Fest), where both the updates and the substring comparison take $\Oh(\log n)$ amortized time, $n$ being the lengths of the strings involved.
The longest common prefix of length $\ell$ is computed in $\Oh(\log n + \log^2\ell)$ amortized time. Our query results are correct whp. Our simpler solution enables other more general updates in $\Oh(\log n)$ amortized time, such as reversing a substring and/or mapping its symbols. We can also regard substrings as circular or as their omega extension.

\end{abstract}



\section{Introduction}\label{sec:introduction}

Consider the problem in which we have to maintain a collection of {\em dynamic strings}, that is, strings we want to modify over time. The modifications may be edit operations such as insertion, deletion, or substitution of a single character; inserting or deleting an entire substring (possibly creating a new string from the deleted substring); adding a fresh string to the collection; etc. 
In terms of queries, we may want to retrieve a symbol or substring of a dynamic string, determine whether two substrings from anywhere in the collection are equal, or even determine the longest prefix shared by two suffixes in the collection (LCP). The collection must be maintained in such a way that both updates and queries have little cost.

This setup is known in general as the {\em dynamic strings} problem. A partial and fairly straightforward  solution are the so-called ropes, or cords~\cite{BoehmAP95}. These are binary trees\footnote{The authors~\cite{BoehmAP95} actually state that they are DAGs and referring to them as binary trees is just a simplification. The reason is that the nodes can have more than one parent, so subtrees may be shared.} where the leaves store short substrings, whose left-to-right concatenation forms the string. Ropes were introduced for the Cedar programming language to speed up handling very long strings; a C implementation (termed cords) was also given in the same paper~\cite{BoehmAP95}. 
As the motivating application of ropes/cords was that of implementing a text editor, they support edit operations and extraction/insertion of substrings to enable fast typing and cut\&paste, as well as
retrieving substrings, but do not support queries like substring equality or LCPs. The trees must be periodically rebalanced to maintain logarithmic times. Recently, a modified version of ropes was implemented for the Ruby language as a basic data type~\cite{MenardSD18}.
This variant supports the same updates but does not give any theoretical guarantee.

The first solution we know of that enables equality tests, by Sundar and Tarjan \cite{ST94}, supports splitting and concatenating whole sequences, and whole-string equality in constant time, with updates taking $\Oh(\sqrt{N\log m}+\log m)$ amortized time, where $N$ is the total length of all the strings in the collection and $m$ is the number of updates so far. It is easy to see that these three primitives encompass all the operations and queries above, except for LCP (substring retrieval is often implicit). The update complexity was soon improved by Mehlhorn et al.~\cite{MSU97} to $\Oh(\log^2 N)$ expected time with a randomized data structure, and $\Oh(\log N (\log m \log^*m + \log N))$ worst-case time with a deterministic one. The deterministic time complexity was later improved by Alstrup et al.~\cite{ABR00} to $\Oh(\log N \log^* N)$ (which holds with high probability, whp), also computing LCPs in $\Oh(\log N)$ worst-case time. Recently, Gawrychowski et al.~\cite{GawrychowskiKKL15,GawrychowskiKKL18} obtained $\Oh(\log N)$ update time whp, retaining constant time to compare substrings, and also decreasing the LCP time to constant, among many other results. They also showed that the problem is essentially closed because just updates and substring equality require $\Omega(\log N)$ time even if allowing amortization.  
Nishimoto et al.~\cite{NIIBT16,NishimotoIIBT20} showed how to compute LCPs in worst-case time $\Oh(\log N + \log\ell\log^* N)$, where $\ell$ is the LCP length, while inserting/deleting substrings of length $\ell$ in worst-case time $\Oh((\ell+\log N \log^* N)\frac{(\log\log N)^2}{\log\log\log N})$.

All these results build on the idea of parsing a string hierarchically by consistently cutting it into blocks, giving unique names to the blocks, and passing the sequence of names to the next level of parsing. The string is then represented by a parse tree of logarithmic height, whose root consists of a single name, which can be compared to the name at the root of another substring to determine string equality. While there is a general consensus on the fact that those solutions are overly complicated, 
Gawrychowski et al.~\cite{GawrychowskiKKL18} mention that

\vspace{1.5pt}
\textit{``We note that it is very simple to achieve $\Oh(\log n)$ update time [...], 
	if we allow the equality queries to give an incorrect
	result with polynomially small probability. We represent every string by a balanced search tree with
	characters in the leaves and every node storing a fingerprint of the sequence represented by its descendant
	leaves. However, it is not clear how to make the answers always correct in this approach [...]. 
	Furthermore, it seems that both computing the longest
	common prefix of two strings of length $n$ and comparing them lexicographically requires $\Omega(\log^2 n)$ time
	in this approach.''}
\vspace{1.5pt}

This suggestion, indeed, connects to the original idea of ropes~\cite{BoehmAP95}. Cardinal and Iacono \cite{CardinalI21} built on the suggestion to develop a kind of tree dubbed ``Data Dependent Tree (DDT)'', which enables updates and LCP computation in $\Oh(\log N)$ {\em expected amortized} time, yet with no errors. DDTs eliminate the chance of errors by ensuring that the fingerprints have no collisions---they simply rebuild all DDTs for all strings in the collection, using a new hash function, when this low-probability event occurs---and reduce the LCP complexity to $\Oh(\log N)$ by ensuring that subtrees representing the same string have the same shape (so one can descend in the subtrees of both strings synchronously).

In this paper we build on the same suggestion \cite{GawrychowskiKKL18}, but explore the use of another kind of tree---an enhanced splay tree---which yields a beautifully simple yet powerful data structure for maintaining dynamic string collections. We obtain logarithmic {\em amortized} update times for most operations (our cost to compute LCPs lies between logarithmic and squared-logarithmic, see later)
and our queries return correct answers whp. The ease of implementation of splay trees makes our solution attractive to be included in a textbook for undergraduate students.

An important consequence of using simpler data structures is that our space usage is $\Oh(N)$, whereas the solutions based on parsings require in addition $\Oh(\log N)$ space per update performed, as each one adds a new path to the parse tree. Since the previous parse tree is still available, those structures are {\em persistent}: one can access any previous version. Our solution is not persistent in principle, but we can make it persistent using $\Oh(\log n)$ extra space per update or query made so far (we cannot make direct use of the techniques of Driscoll et al.~\cite{DriscollSST86}). These add only $\Oh(1)$ amortized time to the operations. 

It would not be hard to obtain {\em worst-case} times instead of amortized ones, by choosing AVL, $\alpha$-balanced, or other trees that guarantee logarithmic height. One can indeed find the use of such binary trees for representing strings in the literature \cite{Ryt03,CLLPPSS05,Gaw11}. Our solution using splay trees has the key advantage of being very simple and easy to understand. The basic operations of splitting and concatenating strings, using worst-case balanced trees, imply attaching and detaching many subtrees, plus careful rebalancing, which is a nightmare to explain and implement.\footnote{As an example, an efficient implementation \cite{KL21} of Rytter's AVL grammar \cite{Ryt03} has over 10,000 lines of C++ code considering only their ``basic'' variant.} Knuth, for example, considered them too complicated to include in his book \cite[p.~473]{Knu98} {\em ``Deletion, concatenation, etc. It is possible to do many other things to balanced trees and maintain the balance, but the algorithms are sufficiently lengthly that the details are beyond the scope of this book.''} Instead, he says \cite[p.~478]{Knu98} {\em ``A much simpler self-adjusting data structure called a splay tree was developed subsequently [...] Splay trees, like the other kinds of balanced trees already mentioned, support the operations of concatenation and splitting as well as insertion and deletion, and in a particularly simple way.''}

\subparagraph*{Our contribution.}
We use a splay tree~\cite{SleatorT85}, enhanced with additional information, to represent each string in the collection, where all the nodes contain string symbols and Karp-Rabin-like fingerprints \cite{KR87,NP18} of the symbols in their subtree.
We refer to our data structure as a {\em forest of enhanced splay trees}, or \Fest. 
As we will see,
we can create new strings in $\Oh(n)$ time, extract substrings of length $\ell$ in $\Oh(\ell+\log n)$ time, perform updates and (correctly whp) compare substrings in $\Oh(\log n)$ time, where $n$ is the length of the strings involved---as opposed to the total length $N$ of all the strings---and the times are amortized (the linear terms are also worst-case). 
Further, we can compute LCPs correctly whp in amortized time $\Oh(\log n + \log^2 \ell)$, where 
$\ell$ is the length of the returned LCP. 

While our LCP time is $\Oh(\log^2 n)$ for long enough $\ell$, LCPs are usually much shorter than the suffixes. For example, in considerably general probabilistic models \cite{Szp93}, the maximum LCP value between {\em any} distinct suffixes of two strings of length $n$ is almost surely $\Oh(\log n)$, in which case our algorithm runs in $\Oh(\log n)$ amortized time.

The versatility of our \Fest\ data structure allows us to easily support other kinds of operations, such as reversing or complementing substrings, or both. We can thus implement the reverse complementation of a substring in a DNA or RNA sequence, whereby the substring is reversed and each character is replaced by its Watson-Crick complement. Substring reversal alone is used in classic problems on genome rearrangements where genomes are represented as sequences of genes, and have to be sorted by reversals (see, e.g., \cite{WattersonEHM82,BafnaP93,Caprara97,CapraraR02,OliveiraDD17,CerbaiF20}, to cite just 
a few). Note that chromosomes can be viewed either as permutations or as strings, when gene duplication is taken into account, see Fertin et al.~\cite{BookGenomeRearrangments}; our \Fest\ data structure accommodates both. We can also implement signed reversals~\cite{HannenhalliP95,Han06}, another model of evolutionary operation used in genome rearrangements. In general, we can combine reversals with any involution on the alphabet, of which signed or Watson-Crick complementation are only examples. In order to support these operations in $\Oh(\log n)$ amortized time, we only need to add new constant-space annotations, further enhancing our splay trees while retaining the running times for the other operations. The obvious solution of maintaining modified copies of the strings (e.g., reversed, complemented, etc.) is less attractive in practice due to the extra space and time needed to store and update all the copies.

\subparagraph*{Operations supported.}
We maintain a collection of strings of total length $N$ in $\Oh(N)$ space, and support the following operations, where we distinguish the basic string data type from dynamic strings (all times are amortized). We have not chosen a minimal set of primitives because reducing to primitives entails considerable performance overheads in practice, even if the asymptotic time complexities are not altered.

\begin{itemize}
	\item $\makestring(w)$ creates a dynamic string $s$ from a basic string $w$, in $\Oh(|s|)$ time. 
	\item $\access(s,i)$ returns the symbol $s[i]$ in $\Oh(\log |s|)$ time.
	\item $\retrieve(s,i,j)$ returns the basic string $w[1..j-i+1] = s[i..j]$, in $\Oh(|w|+\log |s|)$ time.
	\item $\repl(s,i,c)$, $\ins(s,i,c)$, and $\del(s,i)$ perform the basic edit operations on $s$: substituting $s[i]$ by character $c$, inserting $c$ at $s[i]$, and deleting $s[i]$, respectively, all in $\Oh(\log|s|)$ time. For appending $c$ at the end of $s$ one can use $\ins(s,|s|+1,c)$.
	\item $\introduce(s_1,i,s_2)$ inserts $s_2$ at position $i$ of $s_1$ (for $1\le i\le |s_1|+1$), converting $s_1$ to $s_1[..i-1] \cdot s_2 \cdot s_1[i..]$ and destroying $s_2$,
	in $\Oh(\log |s_1s_2|)$ time. 
	\item $\extract(s,i,j)$ creates dynamic string $s' = s[i..j]$, removing it from $s$, 
	in $\Oh(\log |s|)$ time.
	\item $\equal(s_1,i_1,s_2,i_2,\ell)$ determines the equality of substrings $s_1[i_1..i_1+\ell -1]$ and $s_2[i_2..i_2+\ell -1]$ in $\Oh(\log |s_1s_2|)$ time, correctly whp.
	\item $\lcp(s_1,i_1,s_2,i_2)$ computes the length $\ell$ of the longest common prefix between suffixes $s_1[i_1..]$ and $s_2[i_2..]$, in $\Oh(\log |s_1s_2| + \log^2 \ell)$ time, correctly whp, and also tells which suffix is lexicographically smaller. 
	\item $\reverse(s,i,j)$ reverses the substring $s[i..j]$ of $s$, in $\Oh(\log |s|)$ time.
	\item $\map(s,i,j)$ applies a fixed involution (a symbol mapping that is its own inverse) to all the symbols of $s[i..j]$, in $\Oh(\log |s|)$ time.
\end{itemize}

Our data structure also enables easy implementation of other features, such as handling circular strings. This is an important and emerging topic~\cite{ayad2017mars,CharalampopoulosKRPRWZ22,GrossiIJLSZ23,grossi2016circular,IliopoulosKRRWZ23}, as many current sequence collections, in particular in computational biology, consist of circular rather than linear strings. Recent data structures built for circular strings~\cite{BoucherCL0S21a,BoucherCL0S24}, based on the extended Burrows-Wheeler Transform (eBWT)~\cite{MantaciRRS07}, avoid the detour via the linearization and handle the circular input strings directly.
Finally, \Fest\ also allows queries on the omega extensions of strings, that is, on the infinite concatenation $s^{\omega}=s\cdot s \cdot s \cdots$. These occur, for example, in the context of the eBWT, which is based on the so-called omega-order.  
In Section~\ref{sec:circ-omega-summary} we will sketch how to handle circular strings and the omega extension of strings; a detailed description will be given in the full version of the paper.

\section{Basic concepts}\label{sec:basics}

\subparagraph*{Strings.}\label{sec:strings}

We use array-based notation for strings, indexing from 1, so a string $s$ is a finite sequence over a finite ordered alphabet $\Sigma$, written $s=s[1..n]=s[1]s[2]\cdots s[n]$, for some $n\geq 0$. We assume that the alphabet $\Sigma$ is integer. The length of $s$ is denoted $|s|$, and $\epsilon$ denotes the {\em empty string}, the unique string of length $0$. For $1\leq i,j\leq |s|$, we write $s[i..j]=s[i]s[i+1]\cdots s[j]$ for the {\em substring} from $i$ to $j$, where $s[i..j] = \epsilon$ if $i>j$. We write {\em prefixes} as $s[..i] = s[1..i]$ and {\em suffixes} as $s[i..]=s[i..|s|]$. Given two strings $s,t$, their concatenation is written $s\cdot t$ or simply $st$, and $s^k$ denotes the $k$-fold concatenation of $s$, with $s^0=\epsilon$. A substring (prefix, suffix) of $s$ is called {\em proper} if it does not equal $s$.

The {\em longest common prefix} (\LCP) of two strings $s$ and $t$ is defined as the longest string $u$ that is both a prefix of $s$ and $t$, and $\mathrm{lcp}(s,t) = |u|$ as its length. 
One can define the lexicographic order based on the lcp: $s <_{\lex} t$ if either $s$ is a proper prefix of $t$, or otherwise if $s[\ell+1]<t[\ell+1],$ where $\ell = \mathrm{lcp}(s,t)$.

\subparagraph*{Splay trees.}\label{sec:splaytrees}

The {\em splay tree} \cite{SleatorT85} is a binary search tree that guarantees that a sequence of insertions, deletions, and node accesses costs $\Oh(\log n)$ amortized time per operation on a tree of $n$ nodes that starts initially empty. In addition, splay trees support splitting and joining trees, both in $\Oh(\log n)$ amortized time, where $n$ is the total number of nodes involved in the operation.  

The basic operation of the splay tree is called $\splay(x)$, which moves a tree node $x$ to the root by a sequence of primitive rotations called zig, zig-zig, zig-zag, and their symmetric versions. 
Let $x(A,B)$ denote a tree rooted at $x$ with left and right subtrees $A$ and $B$, then the rotation zig-zig converts $z(y(x(A,B),C),D)$ into $x(A,y(B,z(C,D))$, while the rotation zig-zag converts $z(y(A,x(B,C)),D)$ into $x(y(A,B),z(C,D))$. Whether zig-zig or zig-zag (or their symmetric variant) is applied to $x$ depends on its relative position w.r.t.\ its grandparent. Note that both of these operations are composed by two edge rotations. Finally, operation zig, which is only applied if $x$ is a child of the root, converts $y(x(A,B),C)$ into $x(A,y(B,C))$. 

Every access or update on the tree is followed by a $\splay$ on the deepest reached node. 
In particular, after finding a node $x$ in a downward traversal, we do $\splay(x)$ to make $x$ the tree root. The goal is that the costs of all the operations are proportional to the cost of all the related  $\splay$ operations performed, so we can focus on analyzing only the splays.
Many of the splay tree properties can be derived from a general ``access lemma'' \cite[Lem.~1]{SleatorT85}.

\begin{lemma}[Access Lemma~\cite{SleatorT85}] \label{lem:access}
	Let us assign any positive weight $w(x)$ to the nodes $x$ of a splay tree $T$, and define $sw(x)$ as the sum of the weights of all the nodes in the subtree rooted at $x$. Then, the amortized time to splay $x$ is $\Oh(\log(W/sw(x))) \subseteq \Oh(\log(W/w(x)))$, where $W = \sum_{x \in T} w(x)$.
\end{lemma}

The result is obtained by defining $r(x) = \log sw(x)$ (all our logarithms are in base 2) and $\Phi(T) = \sum_{x \in T} r(x)$ as the potential function for the splay tree $T$. If we choose $w(x)=1$ for all $x$, then $W=n$ on a splay tree of $n$ nodes, and thus we obtain $\Oh(\log n)$ amortized cost for each operation. By choosing other functions $w(x)$, one can prove other properties of splay trees like static optimality, the static finger property, and the working set property \cite{SleatorT85}. 

The update operations supported by splay trees include inserting new nodes, deleting nodes, joining two trees (where all the nodes in the second tree go to the right of the nodes in the first tree), and splitting a tree into two at some node (where all the nodes to its right become a second tree). The times of those operations are ruled by the ``balance theorem with updates'' \cite[Thm.~6]{SleatorT85}.

\begin{lemma}[Balance Theorem with Updates~\cite{SleatorT85}] \label{lem:update}
	Any sequence of access, insert, delete, join and split operations on a collection of initially empty splay trees has an amortized cost of $\Oh(\log n)$ per operation, where $n$ is the size of the tree(s) where the operation is carried out.
\end{lemma}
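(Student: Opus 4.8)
The plan is to derive the statement from the Access Lemma (Lemma~\ref{lem:access}) specialized to the uniform weight assignment $w(x)=1$ for \emph{every} node $x$ of \emph{every} tree in the collection, with global potential $\Phi=\sum_x r(x)$, where $r(x)=\log sw(x)$ and $sw(x)$ is just the number of nodes in the subtree rooted at $x$. With these weights, $W$ equals the number of nodes of the tree in which a splay is performed, so a single $\splay$ on a tree of $n$ nodes costs $\Oh(\log n)$ amortized; moreover a node's rank depends only on its current subtree size, so moving nodes between trees (in joins and splits) needs no special bookkeeping. Since the trees start empty, $\Phi$ starts at $0$ and is always nonnegative, so the sum of amortized costs dominates the sum of actual costs, and it suffices to show that each of the five operations is realized by $\Oh(1)$ splays plus $\Oh(1)$ elementary structural edits, each of which changes $\Phi$ by at most $\Oh(\log n)$.

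First I would classify the structural edits; up to left/right symmetry there are only three. (i) \textbf{Attaching a fresh leaf.} If $x_1,\dots,x_m$ are the nodes from the new leaf's parent up to the root, with old subtree sizes $s_1<s_2<\cdots<s_m\le n$, then only these ranks change and $\Phi$ grows by $\sum_{i=1}^{m}\bigl(\log(s_i+1)-\log s_i\bigr)=\log\prod_{i=1}^{m}\frac{s_i+1}{s_i}$. Because $s_i+1\le s_{i+1}$ for $i<m$, this product telescopes to at most $\frac{s_m+1}{s_1}\le n+1$, so the increase is $\Oh(\log n)$ (the new leaf itself has rank $0$). (ii) \textbf{Linking a tree $T_2$ as a child of the root $r$ of a tree $T_1$} (used when joining, and when reattaching a detached subtree during a delete): only $r$'s rank changes, from $\log|T_1|$ to $\log(|T_1|+|T_2|)\le\log n$, an $\Oh(\log n)$ increase. (iii) \textbf{Removing the current root, or detaching one of the root's subtrees} (used in delete and split): no surviving node gains descendants, so $\Phi$ can only decrease and this edit is free.

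Then I would spell out the $\Oh(1)$-splay implementations and tally the costs. A node access is a single $\splay$. To insert, attach the node as a leaf (edit (i)) and $\splay$ it. To delete $x$: $\splay(x)$; detach its subtrees $L,R$ and discard $x$ (edit (iii)); $\splay$ the rightmost node $y$ of $L$ to the root of $L$, which leaves $y$ without a right child; link $R$ there (edit (ii)). To join $T_1,T_2$: $\splay$ the rightmost node of $T_1$ to $T_1$'s root and link $T_2$ as its right child (edit (ii)). To split $T$ at $x$: $\splay(x)$ and detach one of $x$'s subtrees (edit (iii)). In each case the amortized cost is a sum of a constant number of splay costs, each $\Oh(\log n)$ by the Access Lemma, plus $\Oh(\log n)$ from the edits; summing over the whole sequence gives the bound.

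The one delicate point I expect is the potential increase for edit (i): one must exploit that the subtree sizes strictly increase up the insertion path so that $\prod_i (s_i+1)/s_i$ telescopes to $\Oh(n)$; bounding the (up to $n$) terms individually by $\log 2$ would only yield an $\Oh(n)$ total and break the argument. Everything else is a routine consequence of the Access Lemma together with the observation that removals and detachments never raise $\Phi$.
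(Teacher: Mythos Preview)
Your proposal is correct and is precisely the standard Sleator--Tarjan argument. Note, however, that the paper does not actually prove this lemma: it merely cites \cite[Thm.~6]{SleatorT85} and remarks in one sentence that ``this theorem is proved with the potential function that assigns $w(x)=1$ to every node $x$''. Your write-up is a faithful and correct expansion of that sketch, so there is nothing to compare---you have supplied the details the paper omits by citation.

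One small remark on presentation: in your insert analysis you should make explicit that the $\Oh(m)$ actual cost of walking down to the insertion point is absorbed into (a constant multiple of) the actual cost of the subsequent $\splay$, since both traverse the same root-to-leaf path; the Access Lemma then amortizes the whole thing to $\Oh(\log n)$. You clearly intend this, but stating it avoids the appearance of an unaccounted $\Oh(m)$ term.
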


This theorem is proved with the potential function that assigns $w(x)=1$ to every node $x$. Note the theorem considers a forest of splay trees, whose potential function is the sum of the functions $\Phi(T)$ over the trees $T$ in the forest. For details, see the original paper~\cite{SleatorT85}.

\subparagraph*{Karp-Rabin fingerprinting.} \label{sec:kr}

Our queries will be correct ``with high probability'' (whp), meaning a probability of at least $1-1/N^c$ for an arbitrarily large constant $c$, where $N$ is the total size of the collection. This will come from the use of a variant of the original Karp-Rabin fingerprint~\cite{KR87} (cf.~\cite{NP18})
defined as follows. Let $[1..a]$ be the alphabet of our strings and $p \ge a$ a prime number. We choose a random base $b$ uniformly from $[1..p-1]$. The
fingerprint $\kappa$ of string $s[1..n]$ is defined as $\kappa(s) = \left(\sum_{i=0}^{n-1}s[n-i] \cdot b^i\right) \bmod p$. 
We say that two strings $s\neq s'$ of the same length $n$ collide through $\kappa$ if $\kappa(s)=\kappa(s')$, that is, $\kappa(s'')=0$ 
where $s''=s-s'$ is the string defined by $s''[i]=(s[i]-s'[i]) \bmod p$.
Since $\kappa(s'')$ is a polynomial, in the variable $b$, of degree at most $n-1$ over 
the field $\mathbb Z_p$, it has at most $n-1$ roots. The probability of a collision between two strings of length $n$ is then bounded by $(n-1)/(p-1)$ because $b$ is uniformly chosen in $[1..p-1]$. By choosing $p \in \Theta(N^{c+1})$ for any desired constant $c$, we obtain that $\kappa$
is collision-free on any $s\neq s'$ whp.  
We will actually choose $p \in \Theta(N^{c+2})$ because some of our operations perform $\Oh(\mathrm{polylog}\, N)$ string comparisons, not just one. Since $N$ varies over time, we can use instead a fixed upper bound, like the total amount of main memory. We use the RAM machine model where  logical and arithmetic operations on $\Theta(\log N)$ machine words take constant time.

Two fingerprints $\kappa(s)$ and $\kappa(s')$ can then be composed in constant time to form $\kappa(s' \cdot s) = (\kappa(s') \cdot b^{|s|} + \kappa(s)) \bmod p$. To avoid the $\Oh(\log |s|)$ time for  modular exponentiation, we will  maintain the value $b^{|s|} \bmod p$ together with $\kappa(s)$. The corresponding value for $s' \cdot s$ is $(b^{|s'|} \cdot b^{|s|}) \bmod p$, so we can maintain those powers in constant time upon concatenations.

\section{Our data structure and standard operations}\label{sec:linear}

In this section we describe our data structure called \Fest\ (for Forest of enhanced Splay Trees), composed of a collection of (enhanced) splay trees, and then show how the traditional operations on dynamic strings are carried out on it.  

\subsection{The data structure} \label{sec:struct}

We will use a $\FST$ for maintaining the collection of strings, one splay tree per string. A dynamic string $s[1..n]$ is encoded in a splay tree with $n$ nodes such that $s[k]$ is stored in the node $x$ with in-order $k$ (the in-order of a node is the position in which it is listed if we recursively traverse first the left subtree, then the node, and finally the right subtree). We will say that node $x$ {\em represents} 
the substring $s[i..j]$, where $[i..j]$ is the range of the in-orders of all the nodes in the subtree rooted at $x$.
Let $T$ be the splay tree representing string $s$, then for $1\leq i \leq |s|$, we call $\node(i)$ the node with in-order $i$, and for a node $x$ of $T$, we call $\pos(x)$ the in-order of node $x$. 
The root of $T$ is denoted $\root(T)$.

For the amortized analysis of our \FST, our potential function $\Phi$ will be the sum of the potential functions $\Phi(T)$ over all the splay trees $T$ representing our string collection. The collection starts initially empty, with $\Phi=0$. New strings are added to the collection with $\makestring$; then edited with $\repl$, $\ins$, and $\del$, and redistributed with $\introduce$ and $\extract$.

\subparagraph*{Information stored at nodes.}
A node $x$ of the splay tree representing $s[i..j]$ will contain pointers to its left and right children, called $x\leftc$ and $x\rightc$, its symbol $x\car = s[\pos(x)]$, its subtree size $x\size = j-i+1$, its fingerprint $x\fp = \kappa(s[i..j])$, and the value $x\power = b^{j-i+1} \bmod p$. These fields are recomputed in constant time whenever a node $x$ acquires new children $x\leftc$ and/or $x\rightc$ (e.g., during the splay rotations) with the following formulas:  (1) $x\size = x\leftc\size + 1 + x\rightc\size$, (2) $x\fp = ((x\leftc\fp \cdot b + x\car) \cdot x\rightc\power + x\rightc\fp) \bmod p$, and (3) $x\power = (x\leftc\power \cdot b \cdot x\rightc\power) \bmod p$,
as explained in Section~\ref{sec:kr}. For the formula to be complete when the left and/or right child is \textit{null}, we assume $null\size=0$, $null\fp=0$, and $null\power=1$.
We will later incorporate other fields. 

Subtree sizes allow us identify $\node(i)$ given $i$, in the splay tree $T$ representing string $s$, in $\Oh(\log |s|)$ amortized time. This means we can answer $\access(s,i)$ in $\Oh(\log |s|)$ amortized time, since $s[i] = \node(i)\car$. Finding $\node(i)$ is done in the usual way, with the recursive function $\find(i)=$ $\find(\root(T),i)$ that returns the $i$th smallest element in the subtree rooted at the given node. More precisely, $\find(x,i) = x$ if $i = x\leftc\size+1$, $\find(x,i) = \find(x\leftc,i)$ if $i<x\leftc\size+1$, and $\find(x,i) = \find(x\rightc,i-(x\leftc\size+1))$ if $i > x\leftc\size+1$. To obtain logarithmic amortized time, $\find$ splays the node it returns, thus $\pos(\root(T))=i$ holds after calling $\find(\root(T),i)$. 

\subparagraph*{Isolating substrings.}
We will make use of another primitive we call $\isolate(i,j)$, for $1\leq i,j \leq |s|$ and $i \le j+1$, on a tree $T$ representing string $s$. This operation rearranges $T$ in such a way that $s[i..j]$ becomes represented by one subtree, and returns this subtree's root $y$. 

If $i=1$ and $j=n$, then $y=\root(T)$ and we are done. If $i=1$ and $j<n$, then we find (and splay) $\node(j+1)$ using $\find(j+1)$; 
this will move $\node(j+1)$ to the root, and $s[i..j]$ will be represented by the left subtree of the root, so $y=\root(T)\leftc$. Similarly, if $1<i$ and $j=n$, then we perform $\find(i-1)$, so $\node(i-1)$ is splayed to the root and $s[i..j]$ is represented by the right subtree of the root, thus $y=\root(T)\rightc$. 

Finally, if $1<i, j<n$, then splaying first $\node(j+1)$ and then $\node(i-1)$ will typically result in $\node(i-1)$ being the root and $\node(j+1)$ its right child, thus the left subtree of $\node(j+1)$ contains $s[i..j]$, that is, $y = \root(T)\rightc\leftc$. The only exception arises if the last splay operation on $\node(i-1)$ is a zig-zig, as in this case $\node(j+1)$ would become a grandchild, not a child, of the root. Therefore, in this case, we modify the last splay operation: if $\node(i-1)$ is a grandchild of the root and a zig-zig must be applied, we perform instead two consecutive zig operations on $\node(i-1)$ in a bottom-up manner, that is, we first rotate the edge between $\node(i-1)$ and its parent, and then the edge between $\node(i-1)$ and its new parent (former grandparent), see Fig.~\ref{fig:isolate}. 

We now consider the effect of the modified zig-zig operation on the potential. In the proof of Lemma~\ref{lem:access} \cite[Lem.~1]{SleatorT85}, Sleator and Tarjan show that the zig-zig and the zig-zag cases contribute $3(r'(x)-r(x))$ to the amortized cost, where $r'(x)$ is the new value of $r(x)$ after the operation. The sum then telescopes to $3(r(t)-r(x)) = 3\log(sw(t)/sw(x))$ along an upward path towards a root node $t$. The zig rotation, instead, contributes $1+r'(x)-r(x)$, where the $1$ would be problematic if it was not applied only once in the path. Our new zig-zig may, at most one time in the path, cost like two zig's, $2+2(r'(x)-r(x))$, which raises the cost bound
of the whole splay operation from $1+3\log(sw(t)/sw(x))$ to $2+3\log(sw(t)/sw(x))$. This retains the amortized complexity, that is, the amortized time for $\isolate$ is $\Oh(\log |s|)$.

\input{modified_zigzig}

\subsection{Creating a new dynamic string} \label{sec:makestring}

Given a basic string $w[1..n]$, operation $\makestring(w)$ creates a new dynamic string $s[1..n]$ with the same content as $w$, which is added to the \FST. While this can be accomplished in $\Oh(n\log n)$ amortized time via successive $\ins$ operations on an initially empty string, we describe a ``bulk-loading'' technique that achieves linear worst-case (and amortized) time.

The idea is to create, in $\Oh(n)$ time, a perfectly balanced splay tree using the standard recursive procedure. As we show in the next lemma, this shape of the tree adds only $\Oh(n)$ to the potential function, and therefore the amortized time of this procedure is also $\Oh(n)$. 

\begin{lemma}\label{lemma:balanced}
	The potential $\Phi(T)$ of a perfectly balanced splay tree $T$ with $n$ nodes is at most $2n+\Oh(\log^2 n) \subseteq \Oh(n)$.
\end{lemma}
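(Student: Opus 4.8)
The potential is $\Phi(T)=\sum_{x\in T} r(x)$ with $r(x)=\log sw(x)$, and since we use the uniform weights $w(x)=1$ (as in Lemma~\ref{lem:update}), $sw(x)$ is just the size of the subtree rooted at $x$. So the plan is to bound $\sum_{x\in T}\log(\text{size of subtree of }x)$ for the perfectly balanced tree on $n$ nodes. First I would group the nodes by depth: at depth $d$ (root at depth $0$) there are about $2^d$ nodes, and each such node roots a subtree of size roughly $n/2^d$, more precisely between $\lfloor n/2^{d+1}\rfloor$ and $\lceil n/2^d\rceil$. Hence $r(x)\le \log(n/2^d)+O(1) = \log n - d + O(1)$ for a node at depth $d$.

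Next I would sum this over all depths $d=0,1,\dots,\lfloor\log n\rfloor$: the contribution of level $d$ is at most $2^d(\log n - d + O(1))$. The dominant term is $\sum_{d} 2^d(\log n - d)$. Writing $h=\lfloor\log n\rfloor$ and $k=h-d$, this is $\sum_{k\ge 0} 2^{h-k}\cdot k = 2^h\sum_{k\ge0} k/2^k = 2^h\cdot 2 \le 2n$ (using $\sum_{k\ge0}k/2^k=2$), which yields the leading $2n$ term. The remaining $O(1)$ additive slack per node, summed over the $\le h+1$ levels weighted by $2^d$, would again be $O(n)$, so I need to be a little careful here: the claimed bound is $2n+O(\log^2 n)$, not $2n+O(n)$.

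To get the sharper $O(\log^2 n)$ error term rather than $O(n)$, the key point is that I should not bound each $r(x)$ crudely by $\log n - d + O(1)$ but rather use $r(x)\le \log\lceil n/2^d\rceil = \log n - d + \log(1+2^d/n \cdot(\text{something bounded}))$, i.e.\ $r(x) \le \log(n/2^d) + O(2^d/n)$ for $d$ not too close to $h$, so that the correction $\sum_d 2^d\cdot O(2^d/n)$ telescopes/geometric-sums to $O(2^h/n\cdot 2^h)=O(n)$ — still not good enough. The better route is: for the top $h-\log h$ levels the subtree sizes are $\ge \log n$, and there one compares $\log sw(x)$ against $\log(n/2^d)$ with a multiplicative error $(1+O(1/\log n))$, giving total error $O(n/\log n)$; hmm, this is getting delicate. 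Let me instead restructure: I expect the clean argument is to write $\Phi(T)=\sum_x \log sw(x)$ and split into the nodes in the bottom $O(\log\log n)$-ish levels (there are $O(n/\log n)$ such — no).

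Honestly, the cleanest plan: define $S(n)=\Phi(T_n)$ for the balanced tree on $n$ nodes and set up the recurrence $S(n) = \log n + S(\lceil (n-1)/2\rceil) + S(\lfloor (n-1)/2\rfloor)$ with $S(0)=0$, $S(1)=0$. Guess $S(n)\le 2n + c\log^2 n$ and verify by induction: the $\log n$ root term plus $2(n-1)$ from the two recursive calls already gives $2n-2+\log n$, leaving room $c\log^2 n + 2$ to absorb the $c\log^2(\,\cdot\,) $ terms, and since $\log^2(n/2)=(\log n-1)^2 = \log^2 n - 2\log n + 1$, two copies give $2\log^2 n - 4\log n + 2$, so we need $2c\log^2 n - 4c\log n + 2c + \log n \le c\log^2 n + 2n$ — wait, $2c > c$, so this guess is wrong; the error term genuinely wants to be linear unless the $2n$ leading constant is chosen to exactly swallow it. The right guess is $S(n)\le 2n - \log n + O(\log^2 n)$ or to track that the "$2n$" already over-counts: more carefully, $\sum_{d=0}^{h} 2^d\log(n/2^d)$ with the exact $h=\log n$ gives $\sum 2^d(\log n - d)$, and the point is that levels near the bottom ($d$ near $h$) contribute $2^d \cdot O(1)$ each but there the true subtree size is $O(1)$ so $\log sw(x)=O(1)$, and the discrepancy between $\log(n/2^d)$ and $\log sw(x)$ is only significant for $O(\log n)$ values of $d$, each contributing at most $2^d\cdot\log n \le n\log n$ — no good either.

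I'll commit to the recurrence approach with the honest guess $S(n)\le 2n + O(\log^2 n)$, doing the induction by bounding $\log\lceil(n-1)/2\rceil\le \log n$ loosely only when it helps and more tightly ($\le \log n - 1 + 1/n\cdot O(1)$) in the main term; the additive $O(\log^2 n)$ slack is there precisely to absorb the $O(\log n)$ boundary levels where the clean $2^d\log(n/2^d)$ estimate degrades — summing $O(\log n)$ over $O(\log n)$ recursion depth gives $O(\log^2 n)$. The main obstacle, which I'd flag explicitly, is getting the leading constant to be exactly $2$ (and not, say, $3$ or a non-constant times $n$): this requires the tight telescoping $\sum_{k\ge0} k/2^k = 2$ and being careful that floors/ceilings and the "$+1$" for the node itself only contribute lower-order terms.

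Concretely I would present it as: (i) unfold $\Phi(T)$ as a sum over levels; (ii) on level $d$, subtree sizes lie in $[\,n/2^{d+1},\,n/2^{d}+1\,]$, so $r(x)\le \max(0,\log n - d)+O(1)$, with the $\max$ kicking in for the bottom $O(\log n)$ levels; (iii) for $d\le \log n$, bound the level's contribution by $2^d(\log n - d) + 2^d\cdot O(2^d/n)$ and sum to $\le 2n + O(n)$ — then realize the $O(n)$ is too lossy and instead (iii$'$) note $\sum_{d\le\log n}2^d\cdot O(2^d/n) = O(1)\cdot \sum 4^d/n = O(4^{\log n}/n)=O(n)$; the persistent failure of these crude estimates is exactly why the right move is the clean identity computation $\sum_{d=0}^{h}2^d(h-d) = 2^{h+1}-h-2 \le 2n$ for $h=\lfloor\log n\rfloor$, plus a separate, direct $O(\log^2 n)$ bound on the total correction coming from the at most $h+1=O(\log n)$ levels where subtree sizes are not powers-of-two-fractions of $n$, each correction being $O(\log n)$ because there $sw(x)\le n$ trivially and the number of nodes whose $r$-value we're correcting at "level from the bottom" $b$ is $O(2^{h-b})$ with per-node correction $O(b)$, summing to $O(\sum_b b\,2^{h-b}) = O(2^h) = O(n)$ — so once more I am forced to conclude the honest statement is $2n+O(n)=O(n)$ and the "$O(\log^2 n)$" in the lemma must come from a slightly different decomposition than mine. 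Given the uncertainty, I would write the proof to establish the stated $2n+O(\log^2 n)$ via the exact recurrence $S(n)=\log n + S(\lfloor(n-1)/2\rfloor)+S(\lceil(n-1)/2\rceil)$ and an induction with hypothesis $S(n)\le 2n+\alpha\log^2 n + \beta$, choosing $\alpha,\beta$ to close the recursion — this is the cleanest route and the $\log^2 n$ term is the natural output of summing $O(\log n)$ per level over $O(\log n)$ levels of recursion, which is where I'd focus the careful bookkeeping.
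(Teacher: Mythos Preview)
Your repeated attempts all founder on the same point, and the paper's proof shows why. You group nodes by depth $d$ from the \emph{top} and bound $sw(x)\lesssim n/2^d$; this forces you to count the nodes at depth $d$ by $\le 2^d$, and then every additive slack in the size bound gets multiplied by $2^d$ and sums to $\Theta(n)$, which is why you keep landing at $2n+O(n)$ rather than $2n+O(\log^2 n)$.

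The paper flips the grouping: it indexes nodes by their \emph{level} $l$ counted from the bottom (a node at depth $d'$ gets level $l=d-d'+1$, where $d$ is the maximum depth). Now the subtree-size bound is exact, $sw(x)<2^l$, hence $r(x)<l$; all the looseness moves into the \emph{node count}, which is at most $1+n/2^l$ at level $l$. Then
\[
\Phi(T) \;<\; \sum_{l=1}^{O(\log n)}\Bigl(1+\frac{n}{2^l}\Bigr)\, l
\;=\; \underbrace{\sum_{l} l}_{O(\log^2 n)} \;+\; n\underbrace{\sum_{l}\frac{l}{2^l}}_{\le 2}
\;\le\; 2n+O(\log^2 n).
\]
The ``$+1$'' in the count, summed against the weight $l$ over $O(\log n)$ levels, is exactly where the $O(\log^2 n)$ comes from. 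That is the idea you were circling but never found.

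Your recurrence route \emph{can} be made to work, but not with the hypothesis $S(n)\le 2n+c\log^2 n$: as you correctly observed, the two recursive terms produce $2c\log^2(n/2)\approx 2c\log^2 n$, which already exceeds $c\log^2 n$. The induction closes only with a subtractive lower-order term, e.g.\ $S(n)\le 2n-\alpha\log n+\beta$ for suitable $\alpha,\beta$ (the exact value for a complete tree on $2^h-1$ nodes is $2^{h+1}-h-2\approx 2n-\log n$), so the slack you need is \emph{negative}, not a positive $c\log^2 n$. Either fix the induction hypothesis that way, or---much more simply---switch to the bottom-up level decomposition.
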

\begin{proof}
	Let $d$ be the depth of the deepest leaves in a perfectly balanced binary tree, and call $l = d-d'+1$ the {\em level} of any node of depth $d'$. It is easy to see that there are at most $1+n/2^l$ subtrees of level $l$. Those subtrees have at most $2^l-1$ nodes. 
	Separating the sum $\Phi(T) = \sum_{x\in T} r(x)$ by levels $l$ and using the bound $sw(x) < 2^l$ if $x$ is of level $l$, we get
	$\Phi(T) < \sum_{l=1}^{\log n} \left(1+\frac{n}{2^l}\right) \log 2^l = 
	2n + \Oh(\log^2 n).$
\end{proof}

Once the tree is created and the fields $x\car$ are assigned in in-order, we perform a post-order traversal to compute the other fields. This is done in constant time per node using the formulas given in Section~\ref{sec:struct}.

\subsection{Retrieving a substring} \label{sec:retrieve}

Given a string $s$ in the \FST\ and two indices $1 \leq i \leq j \leq |s|$, operation $\retrieve(s,i,j)$ extracts the substring $s[i..j]$ and returns it as a basic string. 
The special case $i=j$ is given by $\access(s,i)$, which finds $\node(i)$, splays it, and returns $\root(T)\car$, recall Section~\ref{sec:struct}. 
If $i<j$, we perform $y=\isolate(i,j)$ and then we return $s[i..j]$ with an in-order traversal of the subtree rooted at $y$. 
Overall, the operation $\retrieve(s,i,j)$ takes $\Oh(\log |s|)$ amortized time for $\isolate$, and then $\Oh(j-i+1)$ worst case time for the traversal of the subtree.

\subsection{Edit operations}

Let $s$ be a string in the \FST, $i$ an index of $s$, and $c$ a character. The simplest edit operation, $\repl(s,i,c)$ writes $c$ at $s[i]$, that is, $s$ becomes $s'=s[..i-1]\cdot c\cdot s[i+1..]$. It is implemented by doing $\find(i)$ in the splay tree $T$ of $s$, in $\Oh(\log |s|)$ amortized time. After the operation, $\node(i)$ is the root, so we set $\root(T)\car = c$ and recompute (only) its fingerprint as explained in Section~\ref{sec:struct}.

Now consider operation $\ins(s,i,c)$, which converts $s$ into $s'=s[..i-1]\cdot c\cdot s[i..]$. This corresponds to the standard insertion of a node in the splay tree, at in-order position $i$.
We first use $\find(i)$ in order to make $x = \node(i)$ the tree root, and then create a new root node $y$, with $y\leftc = x\leftc$ and $y\rightc = x$. We then set  $x\leftc = null$ and recompute the other fields of $x$ as shown in Section~\ref{sec:struct}. Finally, we set $y\car=c$ and also compute its other fields.
By Lemma~\ref{lem:update}, the amortized cost for an insertion is $\Oh(\log |s|)$.

Finally, the operation $\del(s,i)$ converts $s$ into $s' = s[..i-1] \cdot s[i+1..]$. This corresponds to standard deletion in the splay tree: we first do $\find(i)$ in the tree $T$ of $s$, so that $x = \node(i)$ becomes the root, and then join the splay trees of $x\leftc$ and $x\rightc$, isolating the root node $x$ and freeing it. The joined tree now represents $s'$; the amortized cost is $\Oh(\log |s|)$.

\subsection{Introducing and extracting substrings}

Given two strings $s_1$ and $s_2$ represented by trees $T_1$ and $T_2$ in the \FST, and an insertion position $i$ in $s_1$, operation $\introduce(s_1,i,s_2)$ generates a new string $s = s_1[..i-1] \cdot s_2 \cdot s_1[i..]$ (the original strings are not anymore available). We implement this operation by first doing $y = \isolate(i,i-1)$ on the tree $T_1$. Note that in this case $y$ will be a $null$ node, whose in-order position is between $i-1$ and $i$. We then replace this null node by (the root of) the tree $T_2$.  
As shown in Section~\ref{sec:struct}, the node $y$ that we replace has at most two ancestors in $T_1$, say $x_1$ (the root) and $x_2$. We must then recompute the fields of $x_2$ and then of $x_1$. 

Apart from the $\Oh(\log |s_1|)$ amortized time for $\isolate$, the other operations take constant time. We must consider the change in the potential introduced by connecting $T_2$ to $T_1$. In the potential $\Phi$, the summands $\log sw(x_1)$ and $\log sw(x_2)$ will increase to 
$\log (sw(x_1)+|s_2|)$ and $\log (sw(x_2)+|s_2|)$, thus the increase is $\Oh(\log |s_2|)$.
The total amortized time is thus $\Oh(\log |s_1|+\log|s_2|) = \Oh(\log |s_1s_2|)$.

Let $s$ be a string represented by tree $T$ in the \FST\ and $i \le j$ indices in $s$. Function $\extract(s,i,j)$ removes $s[i..j]$ from $s$ and creates a new dynamic string $s'$ from it. This can be carried out by first doing $y = \isolate(i,j)$ on $T$, then detaching $y$ from its parent in $T$ to make it the root of the tree that will represent $s'$, and finally recomputing the fields of the (former) ancestors $x_2$ and $x_1$ of $y$. The change in potential is negative, as $\log sw(x_1)$ and $\log sw(x_2)$ decrease by up to $\Oh(\log(j-i+1))$. The total amortized time is then $\Oh(\log |s|)$.

\subsection{Substring equality} \label{sec:equal}

Let $s_1[i_1..i_1+\ell -1]$ and $s_2[i_2..i_2+\ell -1]$ be two substrings, where possibly $s_1=s_2$. Per Section~\ref{sec:kr}, we can compute $\equal$ whp by comparing  $\kappa(s_1[i_1..i_1+\ell-1])$ and $\kappa(s_2[i_2..i_2+\ell-1])$. 
We compute $y_1 = \isolate(i_1,i_1+\ell-1)$ on the tree of $s_1$ and $y_2 = \isolate(i_2,i_2+\ell-1)$ on the tree of $s_2$. Once node $y_1$ represents $s_1[i_1..i_1+\ell-1]$ and $y_2$ represents $s_2[i_2..i_2+\ell-1]$, we compare $y_1\fp = \kappa(s_1[i_1..i_1+\ell-1])$ with $y_2\fp = \kappa(s_2[i_2..i_2+\ell-1])$.

The splay operations take $\Oh(\log|s_1s_2|)$ amortized time, while the comparison of the fingerprints takes constant time and returns the correct answer whp. Note this is a one-sided error; if the method answers negatively, the strings are distinct.

\section{Extended operations}

In this section we consider less standard operations of dynamic strings, including the computation of LCPs and others we have not seen addressed before.

\subsection{Longest common prefixes} \label{sec:lcp}

Operation $\lcp(s_1,i_1,s_2,i_2)$ computes $\mathrm{lcp}(s_1[i_1..],s_2[i_2..])$ correctly whp, by exponentially searching for the maximum value $\ell$ such that $s_1[i_1..i_1+\ell-1] = s_2[i_2..i_2+\ell-1]$. The exponential search requires $\Oh(\log \ell)$ equality tests, which are done using $\equal$ operations. The amortized cost of this basic solution is then $\Oh(\log |s_1s_2| \log \ell)$; we now improve it. 

We note that all the accesses the exponential search performs in $s_1$ and $s_2$ are at distance $\Oh(\ell)$ from $s_1[i_1]$ and $s_2[i_2]$. We could then use the dynamic finger property \cite{Cole00} to show, with some care, that the amortized time is $\Oh(\log |s_1s_2| + \log^2\ell)$. This property, however, uses a different mechanism of potential functions where trees cannot be joined or split.\footnote{The static finger property cannot be used either, because we need new fingers every time an LCP is computed. Extending the ``unified theorem'' \cite[Thm.~5]{SleatorT85} to $m$ fingers (to support $m$ LCP operations in the sequence) introduces an $\Oh(\log m)$ additive amortized time in the operations, since now $W=\Theta(m)$.} We then use an alternative approach. The main idea is that, if we could bound $\ell$ beforehand, we could isolate those areas so that the accesses inside them would cost $\Oh(\log \ell)$ and then we could reach the desired amortized time. Bounding $\ell$ in less than $\Oh(\log\ell)$ accesses (i.e., $\Oh(\log |s_1s_2| \log \ell)$ time) is challenging, however.
Assuming for now that $s_1 \not= s_2$ (we later handle the case $s_1=s_2$), our plan is as follows (see Fig.~\ref{fig:lcp}):

\begin{enumerate}
	\item Find a (crude) upper bound $\ell' \ge \ell$.
	\item Extract substrings $s_1' = s_1[i_1..i_1+\ell'-1]$ and $s_2' = s_2[i_2..s_2+\ell'-1]$.
	\item Run the basic exponential search for $\ell$ between $s_1'[1..]$ and $s_2'[1..]$.
	\item Reinsert substrings $s_1'$ and $s_2'$ into $s_1$ and $s_2$.
\end{enumerate}

Steps 2 and 4 are carried out in $\Oh(\log |s_1s_2|)$ amortized time using the operations $\extract$ and $\introduce$, respectively. Step 3 will still require $\Oh(\log\ell)$ substring comparisons, but since they will be carried out on the shorter substrings $s_1'$ and $s_2'$, they will take $\Oh(\log \ell \log \ell')$ amortized time. The main challenge is to balance the cost to find $\ell'$ in Step 1 with the quality of the approximation of $\ell'$ so that $\log\ell'$ is not much larger than $\log\ell$.

\begin{figure}[t]

\resizebox{\textwidth}{!}{

\tikzset{every picture/.style={line width=0.75pt}} 

\begin{tikzpicture}[x=0.75pt,y=0.75pt,yscale=-1,xscale=1]

\draw   (30,64) -- (58,142) -- (2,142) -- cycle ;
\draw   (86,11) .. controls (86,4.92) and (90.92,0) .. (97,0) .. controls (103.08,0) and (108,4.92) .. (108,11) .. controls (108,17.08) and (103.08,22) .. (97,22) .. controls (90.92,22) and (86,17.08) .. (86,11) -- cycle ;
\draw    (94,22) -- (30,64) ;
\draw   (108,43) .. controls (108,36.92) and (112.92,32) .. (119,32) .. controls (125.08,32) and (130,36.92) .. (130,43) .. controls (130,49.08) and (125.08,54) .. (119,54) .. controls (112.92,54) and (108,49.08) .. (108,43) -- cycle ;
\draw   (98,74) -- (126,130) -- (70,130) -- cycle ;
\draw    (100,22) -- (118,32) ;
\draw    (98,74) -- (116,54) ;
\draw   (156,74) -- (182,138) -- (130,138) -- cycle ;
\draw    (122,54) -- (156,74) ;
\draw   (70.17,148.42) .. controls (70.14,153.09) and (72.46,155.43) .. (77.13,155.45) -- (87.97,155.51) .. controls (94.64,155.55) and (97.96,157.9) .. (97.94,162.57) .. controls (97.96,157.9) and (101.3,155.59) .. (107.97,155.62)(104.97,155.61) -- (118.82,155.68) .. controls (123.49,155.71) and (125.83,153.39) .. (125.86,148.72) ;
\draw    (188,66) -- (286,66) ;
\draw [shift={(288,66)}, rotate = 180] [color={rgb, 255:red, 0; green, 0; blue, 0 }  ][line width=0.75]    (10.93,-3.29) .. controls (6.95,-1.4) and (3.31,-0.3) .. (0,0) .. controls (3.31,0.3) and (6.95,1.4) .. (10.93,3.29)   ;
\draw   (339,34) -- (367,90) -- (311,90) -- cycle ;
\draw   (311.17,94.42) .. controls (311.14,99.09) and (313.46,101.43) .. (318.13,101.45) -- (328.97,101.51) .. controls (335.64,101.55) and (338.96,103.9) .. (338.94,108.57) .. controls (338.96,103.9) and (342.3,101.59) .. (348.97,101.62)(345.97,101.61) -- (359.82,101.68) .. controls (364.49,101.71) and (366.83,99.39) .. (366.86,94.72) ;
\draw    (386,68) -- (484,68) ;
\draw [shift={(486,68)}, rotate = 180] [color={rgb, 255:red, 0; green, 0; blue, 0 }  ][line width=0.75]    (10.93,-3.29) .. controls (6.95,-1.4) and (3.31,-0.3) .. (0,0) .. controls (3.31,0.3) and (6.95,1.4) .. (10.93,3.29)   ;
\draw   (528,40) -- (556,114) -- (500,114) -- cycle ;
\draw   (500.31,144) .. controls (500.3,148.67) and (502.62,151.01) .. (507.29,151.02) -- (555.14,151.12) .. controls (561.81,151.14) and (565.13,153.48) .. (565.12,158.15) .. controls (565.13,153.48) and (568.47,151.16) .. (575.14,151.17)(572.14,151.16) -- (622.98,151.28) .. controls (627.65,151.29) and (629.99,148.96) .. (630,144.29) ;
\draw   (602,40) -- (630,96) -- (574,96) -- cycle ;
\draw   (554,11) .. controls (554,4.92) and (558.92,0) .. (565,0) .. controls (571.08,0) and (576,4.92) .. (576,11) .. controls (576,17.08) and (571.08,22) .. (565,22) .. controls (558.92,22) and (554,17.08) .. (554,11) -- cycle ;
\draw    (562,22) -- (528,40) ;
\draw    (568,22) -- (602,40) ;
\draw   (500.31,118.42) .. controls (500.29,123.09) and (502.61,125.43) .. (507.28,125.45) -- (525.13,125.52) .. controls (531.8,125.55) and (535.12,127.89) .. (535.1,132.56) .. controls (535.12,127.89) and (538.46,125.57) .. (545.13,125.6)(542.13,125.59) -- (562.97,125.68) .. controls (567.64,125.7) and (569.98,123.38) .. (570,118.71) ;
\draw    (568,180) -- (568,198) -- (96,198) -- (96,181) ;
\draw [shift={(96,179)}, rotate = 90] [color={rgb, 255:red, 0; green, 0; blue, 0 }  ][line width=0.75]    (10.93,-3.29) .. controls (6.95,-1.4) and (3.31,-0.3) .. (0,0) .. controls (3.31,0.3) and (6.95,1.4) .. (10.93,3.29)   ;

\draw (92,162) node [anchor=north west][inner sep=0.75pt]   [align=left] {$\displaystyle \ell '$};
\draw (187,70) node [anchor=north west][inner sep=0.75pt]  [font=\scriptsize] [align=left] {$\mathtt{extract}( s_{1} ,i_{1} ,i_{1} {+}\ell'{-}1)$};
\draw (333,108) node [anchor=north west][inner sep=0.75pt]   [align=left] {$\displaystyle \ell '$};
\draw (387,71) node [anchor=north west][inner sep=0.75pt]  [font=\scriptsize] [align=left] {exponential search};
\draw (561.12,157.58) node [anchor=north west][inner sep=0.75pt]   [align=left] {$\displaystyle \ell '$};
\draw (529.52,132) node [anchor=north west][inner sep=0.75pt]   [align=left] {$\displaystyle \ell $};
\draw (66,132) node [anchor=north west][inner sep=0.75pt]  [font=\scriptsize] [align=left] {$\displaystyle i_{1}$};
\draw (305,187) node [anchor=north west][inner sep=0.75pt]  [font=\scriptsize] [align=left] {re-introduce};

\end{tikzpicture}
}

\caption{Scheme of operations for \lcp\ shown on one of the two strings.}
\label{fig:lcp}

\end{figure}
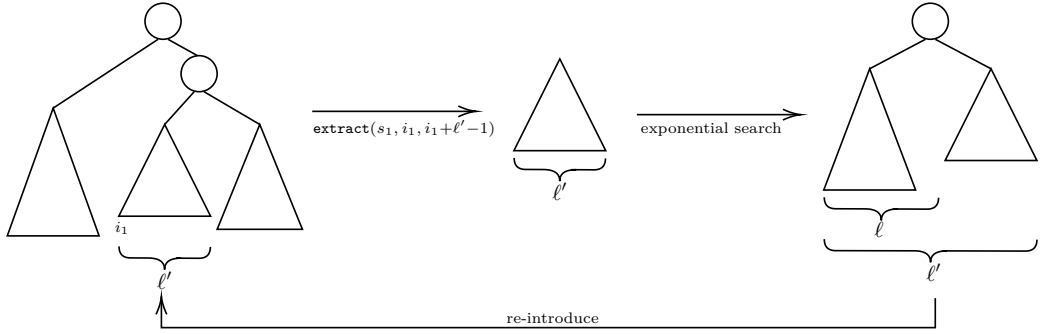

Consider the following strategy for Step 1. Let $n=|s_1s_2|$ and $n'=\min(|s_1|-i_1+1,|s_2|-i_2+1)$. We first check a few border cases that we handle in $\Oh(\log n)$ amortized time: if $s_1[i_1..i_1+n'-1] = s_2[i_2..i_2+n'-1]$ we finish with the answer $\ell = n'$, or else if $s_1[i_1..i_1+1] \not= s_2[i_2..i_2+1]$ we finish with the answer $\ell=0$ or $\ell=1$. Otherwise, we define the sequence $\ell_0 = 2$ and $\ell_i = \min(n',\ell_{i-1}^{\,2})$ and try out the values $\ell_i$ for $i = 1,2,\ldots$, until we obtain $s_1[i_1..i_1+\ell_i-1] \not= s_2[i_2..i_2+\ell_i-1]$. This implies that $\ell_{i-1} \le \ell < \ell_i$, so we can use $\ell' = \ell_i \le \ell^2$. This yields $\Oh(\log\ell \log\ell') = \Oh(\log^2 \ell)$ amortized time for Step 3. On the other hand, since $\ell \ge \ell_{i-1} = 2^{2^{i-1}}$, it holds $i \le 1+\log\log \ell$. Since each of the $i$ values is tried out in $\Oh(\log n)$ time with $\equal$, the amortized cost of Step 1 is $\Oh(\log n \log\log \ell)$ and the total cost to compute $\lcp$ is $\Oh(\log n \log\log \ell + \log^2 \ell)$. In particular, this is $\Oh(\log^2\ell)$ when $\ell$ is large enough, $\log \ell = \Omega(\sqrt{\log n \log\log n})$.

\subparagraph*{Hitting twice.}
To obtain our desired time $\Oh(\log n + \log^2\ell)$ for every value of $\log\ell$, we will apply our general strategy twice. First, we will set $\ell'' = 2^{\log^{2/3} n}$ and determine whether $s_1[i_1..i_1+\ell''-1] = s_2[i_2..i_2+\ell''-1]$. If they are equal, then $\log \ell = \Omega(\log^{2/3} n)$ and we can apply the strategy of the previous paragraph verbatim, obtaining amortized time $\Oh(\log^2 \ell)$. If they are not equal, then we know that $\ell'' > \ell$, so we $\extract$ $s_1'' = s_1[i_1..i_1+\ell''-1]$ and $s_2'' = s_2[i_2..i_2+\ell''-1]$ to complete the search for $\ell'$ inside those (note we are still in Step 1). 
We use the same sequence $\ell_i$ of the previous paragraph, with the only difference that the accesses are done on trees of size $\ell''$ and not $n$; therefore each step costs $\Oh(\log\ell'')=\Oh(\log^{2/3} n)$ instead of $\Oh(\log n)$. After finally finding $\ell'$, we $\introduce$ back $s_1''$ and $s_2''$ into $s_1$ and $s_2$. Step 1 then completes in amortized time $\Oh(\log n + \log^{2/3} n \log\log \ell) = \Oh(\log n)$. Having found $\ell' \le \ell^2$, we proceed with Step 2 onwards as above, taking $\Oh(\log^2\ell)$ additional time.

\subparagraph*{When the strings are the same.}
In the case $s_1=s_2$, assume w.l.o.g.\ $i_1 < i_2$. We can still carry out Step 1 and, if $i_1+\ell' \le i_2$, proceed with the plan in the same way, extracting $s_1'$ and $s_2'$ from the same string and later reintroducing them. In case $i_1+\ell' > i_2$, however, both substrings overlap. In this case we extract just one substring, $s' = s_1[i_1..i_2+\ell'-1]$, which is of length at most $2\ell'$, and run the basic exponential search between $s'[1..]$ and $s'[i_2-i_1+1..]$ still in amortized time $\Oh(\log\ell \log\ell')$. We finally reintroduce $s'$ in $s_1$. The same is done if we need to extract $s_1''$ and $s_2''$: if both come from the same string and $i_1+\ell'' > i_2$, then we extract just one single string $s''=s[i_1..i_2+\ell''-1]$ and obtain the same asymptotic times.

\subparagraph*{Lexicographic comparisons.}
Once we know that (whp) the LCP of the suffixes is of length $\ell$, we can determine which is smaller by accessing (using $\access$) the symbols at positions $s_1[i_1+\ell]$ %
and $s_2[i_2+\ell]$ and comparing them, in $\Oh(\log |s_1s_2|)$ additional amortized time.

\subsection{Substring reversals} \label{sec:reversals}

Operation $\reverse(s,i,j)$ changes $s$ to $s[..i-1] s[j] s[j-1] \cdots s[i+1]s[i] s[j+1..]$.
Reflecting it directly in our current structure requires $\Omega(j-i+1)$ time, which is potentially $\Omega(|s|)$. Our strategy, instead, is to just ``mark'' the subtrees where the reversal should be carried out, and de-amortize its cost across future operations, materializing it progressively as we traverse the marked subtrees. To this end, 
we extend our \FST\ data structure with a new Boolean field $x\rev$ in each node $x$, which indicates that its whole subtree should be regarded as reversed, that is, its descending nodes should be read right-to-left, but that this update has not yet been carried out. This field is set to \textit{false} on newly created nodes. We also add a field $x\fprev$, so that if $x$ represents $s[i..j]$, then $x\fprev = \kappa(s[j]s[j-1]\cdots s[i+1]s[i])$ is the fingerprint of the reversed string. When $x\rev$ is \textit{true}, the fields of $x$ (including $x\fp$ and $x\fprev$) still do not reflect the reversal.

The fields $x\fprev$ must be maintained in the same way as the fields $x\fp$. Concretely, upon every update where the children of node $x$ change, we not only recompute $x\fp$ as shown in Section~\ref{sec:struct}, but also $x\fprev = ((x\rightc\fprev \cdot b + x\car) \cdot x\leftc\power + x\leftc\fprev) \bmod p$.

In order to apply the described reversal to a substring $s[i..j]$, we first compute $y = \isolate(i,j)$ on the tree of $s$, and then toggle the Boolean value $y\rev = \neg~ y\rev$ (note that, if $y$ had already an unprocessed reversal, this is undone without ever materializing it). The operation $\reverse$ then takes $\Oh(\log |s|)$ amortized time, dominated by the cost of $\isolate(i,j)$. We must, however, handle potentially reversed nodes. 

\subparagraph*{Fixing marked nodes.}
Every time we access a tree node, if it is marked as reversed, we {\em fix} it, after which it can be treated as a regular node because its fields will already reflect the reversal of its represented string (though some descendant nodes may still need fixing). 

Fixing a node involves exchanging its left and right children, toggling their reverse marks, and updating the node fingerprint. 
More precisely, we define the primitive $\fix(x)$ as follows: if $x\rev$ is $\mathit{true}$, then (i) set $x\rev = \mathit{false}$, $x\leftc\rev = \neg~x\leftc\rev$, $x\rightc\rev = \neg~x\rightc\rev$, (ii) swap $x\leftc$ with $x\rightc$, and (iii) swap $x\fp$ with $x\fprev$. See Fig.~\ref{fig:fix-operations} for an example. 
It is easy to see that $\fix$ maintains the invariants about the meaning of the reverse fields. 

Because all the operations in splay trees, including the $\splay$, are done along paths that are first traversed downwards from the root, it suffices that we run $\fix(x)$ on every node $x$ we find as we descend from the root (for example, on every node $x$ where we perform $\find(x,i)$), before taking any other action on the node. This ensures that all the accesses and structural changes to the splay tree are performed over fixed nodes, and therefore no algorithm needs further changes. For example, when we perform $\splay(x)$, all the ancestors of $x$ are already fixed. As another example, if we run $\equal$ as in Section~\ref{sec:equal}, the nodes $y_1$ and $y_2$ will already be fixed by the time we read their fingerprint fields. As a third example, if we run $\retrieve(s,i,j)$ as in Section~\ref{sec:retrieve} and the subtree of $y$ has reversed nodes inside, we will progressively fix all those nodes as we traverse the subtree, therefore correctly retrieving $s[i..j]$ within $\Oh(j-i+1)$ time.

\begin{figure}[t]
    \centering

\resizebox{\textwidth}{!}{

\tikzset{every picture/.style={line width=0.75pt}} 

\begin{tikzpicture}[x=0.75pt,y=0.75pt,yscale=-1,xscale=1]

\draw   (43.92,40.25) .. controls (43.92,33.01) and (49.58,27.14) .. (56.55,27.14) .. controls (63.52,27.14) and (69.18,33.01) .. (69.18,40.25) .. controls (69.18,47.5) and (63.52,53.37) .. (56.55,53.37) .. controls (49.58,53.37) and (43.92,47.5) .. (43.92,40.25) -- cycle ;
\draw   (21.96,77.32) -- (43.92,156) -- (0,156) -- cycle ;
\draw    (54.9,53.37) -- (21.96,77.32) ;
\draw   (87.84,77.32) -- (109.8,156) -- (65.88,156) -- cycle ;
\draw    (56.55,53.37) -- (87.84,77.32) ;
\draw   (334.61,40.25) .. controls (334.61,33.01) and (340.26,27.14) .. (347.24,27.14) .. controls (354.21,27.14) and (359.86,33.01) .. (359.86,40.25) .. controls (359.86,47.5) and (354.21,53.37) .. (347.24,53.37) .. controls (340.26,53.37) and (334.61,47.5) .. (334.61,40.25) -- cycle ;
\draw   (312.65,77.32) -- (334.61,156) -- (290.69,156) -- cycle ;
\draw    (345.59,53.37) -- (312.65,77.32) ;
\draw   (378.53,77.32) -- (400.49,156) -- (356.57,156) -- cycle ;
\draw    (347.24,53.37) -- (378.53,77.32) ;
\draw   (617.31,40.25) .. controls (617.31,33.01) and (622.97,27.14) .. (629.94,27.14) .. controls (636.92,27.14) and (642.57,33.01) .. (642.57,40.25) .. controls (642.57,47.5) and (636.92,53.37) .. (629.94,53.37) .. controls (622.97,53.37) and (617.31,47.5) .. (617.31,40.25) -- cycle ;
\draw   (595.35,77.32) -- (617.31,156) -- (573.39,156) -- cycle ;
\draw    (628.29,53.37) -- (595.35,77.32) ;
\draw   (661.24,77.32) -- (683.2,156) -- (639.27,156) -- cycle ;
\draw    (629.94,53.37) -- (661.24,77.32) ;
\draw    (444,104) -- (552,104) ;
\draw [shift={(554,104)}, rotate = 180] [color={rgb, 255:red, 0; green, 0; blue, 0 }  ][line width=0.75]    (10.93,-3.29) .. controls (6.95,-1.4) and (3.31,-0.3) .. (0,0) .. controls (3.31,0.3) and (6.95,1.4) .. (10.93,3.29)   ;
\draw    (135,104) -- (246,104) ;
\draw [shift={(248,104)}, rotate = 180] [color={rgb, 255:red, 0; green, 0; blue, 0 }  ][line width=0.75]    (10.93,-3.29) .. controls (6.95,-1.4) and (3.31,-0.3) .. (0,0) .. controls (3.31,0.3) and (6.95,1.4) .. (10.93,3.29)   ;

\draw (51.05,33.54) node [anchor=north west][inner sep=0.75pt]   [align=left] {$x$};
\draw (16.11,124.19) node [anchor=north west][inner sep=0.75pt]   [align=left] {$A$};
\draw (81.84,124.19) node [anchor=north west][inner sep=0.75pt]   [align=left] {$B$};
\draw (73,27.14) node [anchor=north west][inner sep=0.75pt] [align=left] {{\scriptsize $x\rev = \textit{true}$}};
\draw (135,123) node [anchor=north west][inner sep=0.75pt]  [font=\footnotesize] [align=left] {toggle rev};
\draw (341.74,34.54) node [anchor=north west][inner sep=0.75pt]   [align=left] {$x$};
\draw (306.79,124.19) node [anchor=north west][inner sep=0.75pt]   [align=left] {$A$};
\draw (372.53,124.19) node [anchor=north west][inner sep=0.75pt]   [align=left] {$B$};
\draw (235,52) node [anchor=north west][inner sep=0.75pt]  [font=\scriptsize] [align=left] {{\footnotesize $\root(A)\rev =$}\\{\footnotesize  $\neg \root(A)\rev$}};
\draw (445,127) node [anchor=north west][inner sep=0.75pt]  [font=\footnotesize] [align=left] {swap left-right\\and $x\fp$ with $x\fprev$};
\draw (624.44,34.54) node [anchor=north west][inner sep=0.75pt]   [align=left] {$x$};
\draw (589.35,124.19) node [anchor=north west][inner sep=0.75pt] [align=left] {$B$};
\draw (655.38,124.19) node [anchor=north west][inner sep=0.75pt] [align=left] {$A$};
\draw (363,29) node [anchor=north west][inner sep=0.75pt]  [align=left] {{\scriptsize $x\rev = \textit{false}$}};
\draw (388,54) node [anchor=north west][inner sep=0.75pt]  [font=\scriptsize] [align=left] {{\footnotesize $\root(B)\rev =$}\\{\footnotesize  $\neg \root(B)\rev$}};
\draw (544,28) node [anchor=north west][inner sep=0.75pt]  [align=left] {{\scriptsize $x\fp\leftrightarrow x\fprev$}};

\end{tikzpicture}
}
    \caption{Scheme of the $\fix$ operation on node $x$.}
    \label{fig:fix-operations}
\end{figure}
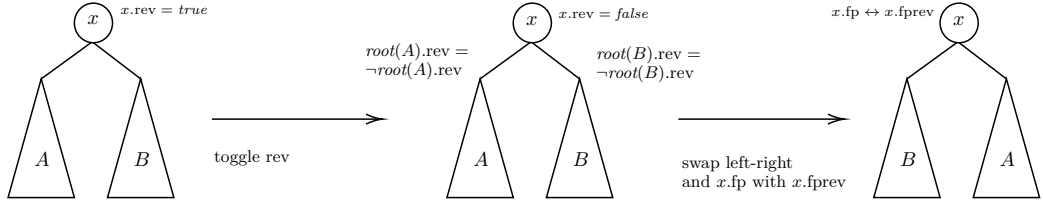

Note that $\fix$ takes constant time per node and does not change the potential function $\Phi$, so no time complexities change due to our adjustments. The new fields also enable other queries, for example to decide whether a string is a palindrome.

\subsection{Involutions}

We support the operation $\map(s,i,j)$ analogously to substring reversals, that is, isolating $s[i..j]$ in a node $y = \isolate(i,j)$ and then marking that the substring covered by node $y$ is mapped using a new Boolean field $y\mapf$, which is set to \textit{true}. This will indicate that every symbol $s[k]$, for $i \le k \le j$, must be interpreted as $f(s[k])$, but that the change has not yet been materialized. Similarly to \reverse, this information will be propagated downwards as we descend into a subtree, otherwise it is maintained in the subtree's root only.
The operation will then take $\Oh(\log |s|)$ amortized time.

To manage the mapping and deamortize its linear cost across subsequent operations, we will also store fields $x\mfp = \kappa(f(s[i]) f(s[i+1]) \cdots f(s[j]))$ and $x\mfprev = \kappa(f(s[j]) f(s[j-1]) \cdots f(s[i]))$, which maintain the fingerprint of the mapped string, and its reverse, represented by $x$. Those are maintained analogously as the previous fingerprints: (1) $x\mfp = ((x\leftc\mfp \cdot b + f(x\car)) \cdot x\rightc\power + x\rightc\mfp) \bmod p$, and (2) $x\mfprev = ((x\rightc\mfprev \cdot b + f(x\car)) \cdot x\leftc\power + x\leftc\mfprev) \bmod p$.

As for string reversals, every time we access a tree node, if it is marked as mapped, we unmark it and toggle the mapped mark of its children, before proceeding with any other action. Precisely, we define the primitive $\fixm(x)$ as follows: if $x\mapf$ is $\mathit{true}$, then (i) set $x\mapf = \mathit{false}$, $x\leftc\mapf = \neg~x\leftc\mapf$, $x\rightc\mapf = \neg~x\rightc\mapf$, (ii) set $x\car = f(x\car)$, and (iii) swap $x\fp$ with $x\mfp$, and $x\fprev$ with $x\mfprev$. We note that, in addition, the $\fix$ operation defined in Section~\ref{sec:reversals} must also exchange $x\mfp$ with $x\mfprev$ if we also support involutions.
Note how, as for reversals, two applications of $f$ cancel each other, which is correct because $f$ is an involution. Operation $\fixm$ is applied in the same way as $\fix$ along tree traversals. 

\subparagraph*{Reverse complementation.}
By combining string reversals and involutions, we can for example support the application of {\em reverse complementation} of substrings in DNA sequences, where a substring $s[i..j]$ is reversed and in addition its symbols are replaced by their Watson-Crick complement, applying the involution $f(\mathtt{A}) = \mathtt{T}$, $f(\mathtt{T}) = \mathtt{A}$, $f(\mathtt{C}) = \mathtt{G}$, and $f(\mathtt{G}) = \mathtt{C}$. In case we {\em only} want to perform reverse complementation (and not reversals and involutions independently), we can simplify our fields and maintain only a Boolean field $x\rc$ and the fingerprint $x\mfprev$ in addition to $x\fp$. Fixing a node  consists of: if $x\rc$ is $\mathit{true}$, then (i) set $x\rc = \mathit{false}$, $x\leftc\rc = \neg~x\leftc\rc$, $x\rightc\rc = \neg~x\rightc\rc$, (ii) set $x\car = f(x\car)$, (iii) swap $x\leftc$ with $x\rightc$,  (iv) swap $x\fp$ with $x\mfprev$. 

\section{Circular strings and omega extension}\label{sec:circ-omega-summary}

Our data structure can be easily extended to handle circular strings. We do this by introducing a new routine, called \ttrotate, which allows us linearize the circular string starting at any of its indices. By carefully using this primitive, along with a slight modification for the computation of fingerprints, we can support every operation that we presented on linear strings with the same time bounds, as well as signed reversals, in $\Oh(\log|\hat{s}|)$ amortized time. 

By supporting operations on circular strings, we can also handle the omega extension of strings, which is the infinite concatenation of a string: $s^{\omega} = s \cdot s \cdots$. Again, we are able to meet the same time bounds on every operation on linear strings. We also define two ways to implement the equality between omega-extended substrings. Full details will be contained in the full version of the paper. 

\section{Conclusion}\label{sec:conclusion}

We presented a new data structure, a forest of enhanced splay trees (\FST), to handle collections of dynamic strings. Our solution is much simpler than those offering the best theoretical results, 
while still offering logarithmic amortized times for most update and query operations. We answer queries correctly whp, and updates are always correct.

To build our data structure, we employ an approach that differs from theoretical solutions: we use a splay tree for representing each string, enhancing it with additional annotations. The use of binary trees to represent dynamic strings is not new, but exploiting the simplicity of splay trees for attaching and detaching subtrees is. As our \FST\ is easy to understand, explain, and implement, we believe that it offers the opportunity of wide usability and can become a textbook implementation of dynamic strings. Further, we have found nontrivial---yet perfectly implementable---solutions to relevant queries, like computing the length $\ell$ of the longest common prefix of two suffixes in time $\Oh(\log n + \log^2 \ell)$ instead of the trivial $\Oh(\log^2 n)$. The simplicity of our solution enables new features, like the possibility of reversing a substring, or reverse-complementing it, to be easily implemented in logarithmic amortized time. 
Our data structure also allows handling circular strings, as well as omega-extensions of strings---features competing solutions have not explored.

\bibliography{biblio}

\begin{thebibliography}{10}

\bibitem{ABR00}
Stephen Alstrup, Gerth~St{\o}lting Brodal, and Theis Rauhe.
\newblock Pattern matching in dynamic texts.
\newblock In {\em Proc. 11th Annual {ACM-SIAM} Symposium on Discrete Algorithms
  (SODA)}, pages 819--828, 2000.

\bibitem{AmirBCK19}
Amihood Amir, Itai Boneh, Panagiotis Charalampopoulos, and Eitan Kondratovsky.
\newblock Repetition detection in a dynamic string.
\newblock In {\em Proc. 27th Annual European Symposium on Algorithms (ESA)},
  pages 5:1--5:18, 2019.

\bibitem{AmirCIPR17}
Amihood Amir, Panagiotis Charalampopoulos, Costas~S. Iliopoulos, Solon~P.
  Pissis, and Jakub Radoszewski.
\newblock Longest common factor after one edit operation.
\newblock In {\em Proc. 24th International Symposium on String Processing and
  Information Retrieval (SPIRE)}, pages 14--26, 2017.

\bibitem{AmirCPR20}
Amihood Amir, Panagiotis Charalampopoulos, Solon~P. Pissis, and Jakub
  Radoszewski.
\newblock Dynamic and internal longest common substring.
\newblock {\em Algorithmica}, 82(12):3707--3743, 2020.

\bibitem{ayad2017mars}
Lorraine~A.K. Ayad and Solon~P. Pissis.
\newblock {MARS}: Improving multiple circular sequence alignment using refined
  sequences.
\newblock {\em BMC Genomics}, 18(1):1--10, 2017.

\bibitem{BafnaP93}
Vineet Bafna and Pavel~A. Pevzner.
\newblock Genome rearrangements and sorting by reversals.
\newblock In {\em Proc. 34th Annual Symposium on Foundations of Computer
  Science (FOCS)}, pages 148--157, 1993.

\bibitem{BoehmAP95}
Hans{-}Juergen Boehm, Russell~R. Atkinson, and Michael~F. Plass.
\newblock Ropes: An alternative to strings.
\newblock {\em Software Practice and Experience}, 25(12):1315--1330, 1995.

\bibitem{BoucherCL0S21a}
Christina Boucher, Davide Cenzato, {\relax Zs}uzsanna Lipt{\'{a}}k,
  Massimiliano Rossi, and Marinella Sciortino.
\newblock Computing the original {eBWT} faster, simpler, and with less memory.
\newblock In {\em Proc. 28th International Symposium on String Processing and
  Information Retrieval (SPIRE)}, pages 129--142, 2021.

\bibitem{BoucherCL0S24}
Christina Boucher, Davide Cenzato, {\relax Zs}uzsanna Lipt{\'{a}}k,
  Massimiliano Rossi, and Marinella Sciortino.
\newblock $r$-indexing the {eBWT}.
\newblock {\em Information and Computation}, 298:105155, 2024.
\newblock \href {https://doi.org/10.1016/j.ic.2024.105155}
  {\path{doi:10.1016/j.ic.2024.105155}}.

\bibitem{Caprara97}
Alberto Caprara.
\newblock Sorting by reversals is difficult.
\newblock In {\em Proc. 1st Annual International Conference on Research in
  Computational Molecular Biology (RECOMB)}, pages 75--83, 1997.

\bibitem{CapraraR02}
Alberto Caprara and Romeo Rizzi.
\newblock Improved approximation for breakpoint graph decomposition and sorting
  by reversals.
\newblock {\em Journal of Combinatorial Optimization}, 6(2):157--182, 2002.

\bibitem{CardinalI21}
Jean Cardinal and John Iacono.
\newblock Modular subset sum, dynamic strings, and zero-sum sets.
\newblock In {\em Proc. 4th Symposium on Simplicity in Algorithms ({SOSA})},
  pages 45--56. {SIAM}, 2021.

\bibitem{CerbaiF20}
Giulio Cerbai and Luca~S. Ferrari.
\newblock Permutation patterns in genome rearrangement problems: The reversal
  model.
\newblock {\em Discrete Applied Mathematics}, 279:34--48, 2020.

\bibitem{CharalampopoulosKM20}
Panagiotis Charalampopoulos, Tomasz Kociumaka, and Shay Mozes.
\newblock Dynamic string alignment.
\newblock In {\em Proc. 31st Annual Symposium on Combinatorial Pattern Matching
  (CPM)}, pages 9:1--9:13, 2020.

\bibitem{CharalampopoulosKRPRWZ22}
Panagiotis Charalampopoulos, Tomasz Kociumaka, Jakub Radoszewski, Solon~P.
  Pissis, Wojciech Rytter, Tomasz Walen, and Wiktor Zuba.
\newblock Approximate circular pattern matching.
\newblock In {\em Proc. 30th Annual European Symposium on Algorithms (ESA)},
  pages 35:1--35:19, 2022.

\bibitem{CLLPPSS05}
M.~Charikar, E.~Lehman, D.~Liu, R.~Panigrahy, M.~Prabhakaran, A.~Sahai, and
  A.~Shelat.
\newblock The smallest grammar problem.
\newblock {\em IEEE Transactions on Information Theory}, 51(7):2554--2576,
  2005.

\bibitem{CliffordGLS18}
Rapha{\"{e}}l Clifford, Allan Gr{\o}nlund, Kasper~Green Larsen, and Tatiana
  Starikovskaya.
\newblock Upper and lower bounds for dynamic data structures on strings.
\newblock In {\em Proc. 35th Symposium on Theoretical Aspects of Computer
  Science (STACS)}, pages 22:1--22:14, 2018.

\bibitem{Cole00}
Richard Cole.
\newblock On the dynamic finger conjecture for splay trees. {Part II: The}
  proof.
\newblock {\em {SIAM} Journal on Computing}, 30(1):44--85, 2000.

\bibitem{DriscollSST86}
James~R. Driscoll, Neil Sarnak, Daniel~Dominic Sleator, and Robert~Endre
  Tarjan.
\newblock Making data structures persistent.
\newblock In {\em Proc. 18th Annual {ACM} Symposium on Theory of Computing
  (STOC)}, pages 109--121, 1986.

\bibitem{BookGenomeRearrangments}
Guillaume Fertin, Anthony Labarre, Irena Rusu, Eric Tannier, and St{\'{e}}phane
  Vialette.
\newblock {\em Combinatorics of Genome Rearrangements}.
\newblock {MIT} Press, 2009.

\bibitem{FunakoshiNIBT18}
Mitsuru Funakoshi, Yuto Nakashima, Shunsuke Inenaga, Hideo Bannai, and Masayuki
  Takeda.
\newblock Longest substring palindrome after edit.
\newblock In {\em Proc. 29th Annual Symposium on Combinatorial Pattern Matching
  (CPM)}, pages 12:1--12:14, 2018.

\bibitem{Gaw11}
Pawel Gawrychowski.
\newblock Pattern matching in {Lempel-Ziv} compressed strings: Fast, simple,
  and deterministic.
\newblock In {\em Proc. 19th Annual European Symposium on Algorithms (ESA)},
  pages 421--432, 2011.

\bibitem{GawrychowskiKKL15}
Pawel Gawrychowski, Adam Karczmarz, Tomasz Kociumaka, Jakub Lacki, and Piotr
  Sankowski.
\newblock Optimal dynamic strings.
\newblock {\em CoRR}, abs/1511.02612, 2015.

\bibitem{GawrychowskiKKL18}
Pawel Gawrychowski, Adam Karczmarz, Tomasz Kociumaka, Jakub Lacki, and Piotr
  Sankowski.
\newblock Optimal dynamic strings.
\newblock In {\em Proc. 29th Annual {ACM-SIAM} Symposium on Discrete Algorithms
  (SODA)}, pages 1509--1528, 2018.

\bibitem{GrossiIJLSZ23}
Roberto Grossi, Costas~S. Iliopoulos, Jesper Jansson, Zara Lim, Wing{-}Kin
  Sung, and Wiktor Zuba.
\newblock Finding the cyclic covers of a string.
\newblock In {\em Proc. 17th International Conference and Workshops on
  Algorithms and Computation (WALCOM)}, pages 139--150, 2023.

\bibitem{grossi2016circular}
Roberto Grossi, Costas~S Iliopoulos, Robert Mercas, Nadia Pisanti, Solon~P
  Pissis, Ahmad Retha, and Fatima Vayani.
\newblock Circular sequence comparison: algorithms and applications.
\newblock {\em Algorithms for Molecular Biology}, 11(1):1--14, 2016.

\bibitem{Han06}
Yijie Han.
\newblock Improving the efficiency of sorting by reversals.
\newblock In {\em Proc. International Conference on Bioinformatics {\&}
  Computational Biology (BIOCOMP)}, pages 406--409, 2006.

\bibitem{HannenhalliP95}
Sridhar Hannenhalli and Pavel~A. Pevzner.
\newblock Transforming cabbage into turnip: Polynomial algorithm for sorting
  signed permutations by reversals.
\newblock In {\em Proc. 27th Annual {ACM} Symposium on Theory of Computing
  (STOC)}, pages 178--189, 1995.

\bibitem{IliopoulosKRRWZ23}
Costas~S. Iliopoulos, Tomasz Kociumaka, Jakub Radoszewski, Wojciech Rytter,
  Tomasz Walen, and Wiktor Zuba.
\newblock Linear-time computation of cyclic roots and cyclic covers of a
  string.
\newblock In {\em Proc. 34th Annual Symposium on Combinatorial Pattern Matching
  (CPM)}, pages 15:1--15:15, 2023.

\bibitem{KR87}
Richard~M. Karp and Michael~O. Rabin.
\newblock Efficient randomized pattern-matching algorithms.
\newblock {\em IBM Journal of Research and Development}, 31(2):249--260, 1987.

\bibitem{KempaK22}
Dominik Kempa and Tomasz Kociumaka.
\newblock Dynamic suffix array with polylogarithmic queries and updates.
\newblock In {\em Proc. 54th Annual {ACM} {SIGACT} Symposium on Theory of
  Computing ({STOC})}, pages 1657--1670. {ACM}, 2022.

\bibitem{KempaK22arxiv}
Dominik Kempa and Tomasz Kociumaka.
\newblock Dynamic suffix array with polylogarithmic queries and updates.
\newblock {\em CoRR}, abs/2201.01285, 2022.
\newblock URL: \url{https://arxiv.org/abs/2201.01285}, \href
  {http://arxiv.org/abs/2201.01285} {\path{arXiv:2201.01285}}.

\bibitem{KL21}
Dominik Kempa and Ben Langmead.
\newblock Fast and space-efficient construction of {AVL} grammars from the
  {LZ77} parsing.
\newblock {\em CoRR}, 2105.11052, 2021.

\bibitem{Knu98}
D.~E. Knuth.
\newblock {\em The Art of Computer Programming, volume 3: Sorting and
  Searching}.
\newblock Addison-Wesley, 2nd edition, 1998.

\bibitem{Kociumaka0S23}
Tomasz Kociumaka, Anish Mukherjee, and Barna Saha.
\newblock Approximating edit distance in the fully dynamic model.
\newblock In {\em Proc. 64th {IEEE} Annual Symposium on Foundations of Computer
  Science ({FOCS})}, pages 1628--1638. {IEEE}, 2023.

\bibitem{lothaire3}
M.~Lothaire.
\newblock {\em Applied Combinatorics on Words}.
\newblock Cambridge University Press, 2005.

\bibitem{MantaciRRS07}
Sabrina Mantaci, Antonio Restivo, Giovanna Rosone, and Marinella Sciortino.
\newblock An extension of the {Burrows-Wheeler} transform.
\newblock {\em Theoretical Computer Science}, 387(3):298--312, 2007.

\bibitem{MSU97}
Kurt Mehlhorn, Rajamani Sundar, and Christian Uhrig.
\newblock Maintaining dynamic sequences under equality tests in polylogarithmic
  time.
\newblock {\em Algorithmica}, 17(2):183--198, 1997.

\bibitem{MenardSD18}
Kevin Menard, Chris Seaton, and Benoit Daloze.
\newblock Specializing ropes for ruby.
\newblock In {\em Proc. 15th International Conference on Managed Languages {\&}
  Runtimes (ManLang)}, pages 10:1--10:7, 2018.

\bibitem{NP18}
Gonzalo Navarro and Nicola Prezza.
\newblock Universal compressed text indexing.
\newblock {\em Theoretical Computer Science}, 762:41--50, 2019.

\bibitem{NIIBT16}
Takaaki Nishimoto, Tomohiro I, Shunsuke Inenaga, Hideo Bannai, and Masayuki
  Takeda.
\newblock Fully dynamic data structure for {LCE} queries in compressed space.
\newblock In {\em Proc. 41st International Symposium on Mathematical
  Foundations of Computer Science (MFCS)}, pages 72:1--72:14, 2016.

\bibitem{NishimotoIIBT20}
Takaaki Nishimoto, Tomohiro I, Shunsuke Inenaga, Hideo Bannai, and Masayuki
  Takeda.
\newblock Dynamic index and {LZ} factorization in compressed space.
\newblock {\em Discrete Applied Mathematics}, 274:116--129, 2020.

\bibitem{OliveiraDD17}
Andre~Rodrigues Oliveira, Ulisses Dias, and Zanoni Dias.
\newblock On the sorting by reversals and transpositions problem.
\newblock {\em Journal of Universal Computer Science}, 23(9):868--906, 2017.

\bibitem{Ryt03}
W.~Rytter.
\newblock Application of {L}empel-{Z}iv factorization to the approximation of
  grammar-based compression.
\newblock {\em Theoretical Computer Science}, 302(1-3):211--222, 2003.

\bibitem{SleatorT85}
Daniel~Dominic Sleator and Robert~Endre Tarjan.
\newblock Self-adjusting binary search trees.
\newblock {\em Journal of the ACM}, 32(3):652--686, 1985.

\bibitem{SZ04}
Damien Stehl{\'{e}} and Paul Zimmermann.
\newblock A binary recursive gcd algorithm.
\newblock In {\em Proc. 6th International Symposium on Algorithmic Number
  Theory (ANTS)}, pages 411--425, 2004.

\bibitem{ST94}
Rajamani Sundar and Robert~E. Tarjan.
\newblock Unique binary-search-tree representations and equality testing of
  sets and sequences.
\newblock {\em SIAM Journal on Computing}, 23(1):24--44, 1994.

\bibitem{Szp93}
Wojciech Szpankowski.
\newblock A generalized suffix tree and its (un)expected asymptotic behaviors.
\newblock {\em SIAM Journal on Computing}, 22(6):1176--1198, 1993.

\bibitem{UrabeNIBT18}
Yuki Urabe, Yuto Nakashima, Shunsuke Inenaga, Hideo Bannai, and Masayuki
  Takeda.
\newblock Longest {L}yndon substring after edit.
\newblock In {\em Proc. 29th Annual Symposium on Combinatorial Pattern Matching
  (CPM)}, pages 19:1--19:10, 2018.

\bibitem{WattersonEHM82}
G.A. Watterson, W.J. Ewens, T.E. Hall, and A.~Morgan.
\newblock The chromosome inversion problem.
\newblock {\em Journal of Theoretical Biology}, 99:1--7, 1982.

\end{thebibliography}

\newpage
\appendix

\section*{APPENDIX}

\section{Figures}

\begin{figure}[H]
    \centering

    \resizebox{\textwidth}{!}{
    \tikzset{every picture/.style={line width=0.75pt}} 
    
    \begin{tikzpicture}[x=0.75pt,y=0.75pt,yscale=-1,xscale=1]
    
    \draw    (89,22) -- (47,30) ;
    \draw    (141,30) -- (99,22) ;
    \draw    (41,52) -- (23,68) ;
    \draw    (137,52) -- (119,68) ;
    \draw    (19,90) -- (11,106) ;
    \draw    (115,90) -- (107,106) ;
    \draw    (51,52) -- (71,68) ;
    \draw    (147,52) -- (165,68) ;
    \draw    (67,90) -- (59,106) ;
    \draw    (163,90) -- (155,106) ;
    \draw    (183,68) -- (290,68) ;
    \draw [shift={(292,68)}, rotate = 180] [color={rgb, 255:red, 0; green, 0; blue, 0 }  ][line width=0.75]    (10.93,-3.29) .. controls (6.95,-1.4) and (3.31,-0.3) .. (0,0) .. controls (3.31,0.3) and (6.95,1.4) .. (10.93,3.29)   ;
    \draw    (318,94) -- (308,109) ;
    \draw    (303,131) -- (295,147) ;
    \draw    (328,94) -- (338,109) ;
    \draw    (408,22) -- (426,38) ;
    \draw    (334,131) -- (326,147) ;
    \draw    (424,60) -- (416,76) ;
    \draw    (336,57) -- (326,72) ;
    \draw    (348,58) -- (360,72) ;
    \draw    (364,21) -- (346,37) ;
    \draw    (443,68) -- (554,68) ;
    \draw [shift={(556,68)}, rotate = 180] [color={rgb, 255:red, 0; green, 0; blue, 0 }  ][line width=0.75]    (10.93,-3.29) .. controls (6.95,-1.4) and (3.31,-0.3) .. (0,0) .. controls (3.31,0.3) and (6.95,1.4) .. (10.93,3.29)   ;
    \draw    (620,20) -- (652,28) ;
    \draw    (643,126) -- (653,141) ;
    \draw    (649,163) -- (641,179) ;
    \draw    (632,88) -- (641,104) ;
    \draw    (648,49) -- (630,66) ;
    \draw    (579,28) -- (608,20) ;
    \draw    (599,88) -- (591,104) ;
    \draw    (680,88) -- (672,104) ;
    \draw    (660,50) -- (680,66) ;
    \draw    (580,50) -- (600,66) ;
    
    \draw (39,200) node [anchor=north west][inner sep=0.75pt]   [align=left] {$s =$mississippi};
    \draw    (11.5, 117.5) circle [x radius= 11.34, y radius= 11.34]   ;
    \draw (6,112) node [anchor=north west][inner sep=0.75pt]  [font=\scriptsize] [align=left] {m};
    \draw    (94.5, 11.5) circle [x radius= 11.34, y radius= 11.34]   ;
    \draw (89,6) node [anchor=north west][inner sep=0.75pt]  [font=\scriptsize] [align=left] {$\:$s};
    \draw    (46.5, 41.5) circle [x radius= 11.34, y radius= 11.34]   ;
    \draw (41,36) node [anchor=north west][inner sep=0.75pt]  [font=\scriptsize] [align=left] {$\:$s};
    \draw    (142.5, 41.5) circle [x radius= 11.34, y radius= 11.34]   ;
    \draw (137,36) node [anchor=north west][inner sep=0.75pt]  [font=\scriptsize] [align=left] {$\:$p};
    \draw    (23.5, 79.5) circle [x radius= 11.34, y radius= 11.34]   ;
    \draw (18,74) node [anchor=north west][inner sep=0.75pt]  [font=\scriptsize] [align=left] {$\:$i};
    \draw    (70.5, 79.5) circle [x radius= 11.34, y radius= 11.34]   ;
    \draw (65,74) node [anchor=north west][inner sep=0.75pt]  [font=\scriptsize] [align=left] {$\:$i};
    \draw    (58.5, 117.5) circle [x radius= 11.34, y radius= 11.34]   ;
    \draw (53,112) node [anchor=north west][inner sep=0.75pt]  [font=\scriptsize] [align=left] {$\:$s};
    \draw    (119.5, 79.5) circle [x radius= 11.34, y radius= 11.34]   ;
    \draw (114,74) node [anchor=north west][inner sep=0.75pt]  [font=\scriptsize] [align=left] {$\:$i};
    \draw    (107.5, 117.5) circle [x radius= 11.34, y radius= 11.34]   ;
    \draw (102,112) node [anchor=north west][inner sep=0.75pt]  [font=\scriptsize] [align=left] {$\:$s};
    \draw    (166.5, 79.5) circle [x radius= 11.34, y radius= 11.34]   ;
    \draw (161,74) node [anchor=north west][inner sep=0.75pt]  [font=\scriptsize] [align=left] {$\:$i};
    \draw    (154.5, 117.5) circle [x radius= 11.34, y radius= 11.34]   ;
    \draw (149,112) node [anchor=north west][inner sep=0.75pt]  [font=\scriptsize] [align=left] {$\:$p};
    \draw (180,72) node [anchor=north west][inner sep=0.75pt]   [align=left] {$\mathtt{extract}( s,9,11)$};
    \draw (341,200) node [anchor=north west][inner sep=0.75pt]   [align=left] {$s =$mississi\\$s'=$ppi};
    \draw    (295.5, 158.5) circle [x radius= 11.34, y radius= 11.34]   ;
    \draw (290,153) node [anchor=north west][inner sep=0.75pt]  [font=\scriptsize] [align=left] {m};
    \draw    (342.5, 47.5) circle [x radius= 11.34, y radius= 11.34]   ;
    \draw (337,42) node [anchor=north west][inner sep=0.75pt]  [font=\scriptsize] [align=left] {$\:$s};
    \draw    (323.5, 83.5) circle [x radius= 11.34, y radius= 11.34]   ;
    \draw (318,78) node [anchor=north west][inner sep=0.75pt]  [font=\scriptsize] [align=left] {$\:$s};
    \draw    (403.5, 11.5) circle [x radius= 11.34, y radius= 11.34]   ;
    \draw (398,6) node [anchor=north west][inner sep=0.75pt]  [font=\scriptsize] [align=left] {$\:$p};
    \draw    (307.5, 120.5) circle [x radius= 11.34, y radius= 11.34]   ;
    \draw (302,115) node [anchor=north west][inner sep=0.75pt]  [font=\scriptsize] [align=left] {$\:$i};
    \draw    (337.5, 120.5) circle [x radius= 11.34, y radius= 11.34]   ;
    \draw (332,115) node [anchor=north west][inner sep=0.75pt]  [font=\scriptsize] [align=left] {$\:$i};
    \draw    (325.5, 158.5) circle [x radius= 11.34, y radius= 11.34]   ;
    \draw (320,153) node [anchor=north west][inner sep=0.75pt]  [font=\scriptsize] [align=left] {$\:$s};
    \draw    (370.5, 11.5) circle [x radius= 11.34, y radius= 11.34]   ;
    \draw (365,6) node [anchor=north west][inner sep=0.75pt]  [font=\scriptsize] [align=left] {$\:$i};
    \draw    (362.5, 83.5) circle [x radius= 11.34, y radius= 11.34]   ;
    \draw (357,78) node [anchor=north west][inner sep=0.75pt]  [font=\scriptsize] [align=left] {$\:$s};
    \draw    (427.5, 49.5) circle [x radius= 11.34, y radius= 11.34]   ;
    \draw (422,44) node [anchor=north west][inner sep=0.75pt]  [font=\scriptsize] [align=left] {$\:$i};
    \draw    (415.5, 87.5) circle [x radius= 11.34, y radius= 11.34]   ;
    \draw (410,82) node [anchor=north west][inner sep=0.75pt]  [font=\scriptsize] [align=left] {$\:$p};
    \draw (440,72) node [anchor=north west][inner sep=0.75pt]   [align=left] {$\mathtt{introduce}( s,1,s')$};
    \draw    (613.5, 10.5) circle [x radius= 11.34, y radius= 11.34]   ;
    \draw (608,5) node [anchor=north west][inner sep=0.75pt]  [font=\scriptsize] [align=left] {m};
    \draw    (626.5, 77.5) circle [x radius= 11.34, y radius= 11.34]   ;
    \draw (621,72) node [anchor=north west][inner sep=0.75pt]  [font=\scriptsize] [align=left] {$\:$i};
    \draw    (652.5, 152.5) circle [x radius= 11.34, y radius= 11.34]   ;
    \draw (647,147) node [anchor=north west][inner sep=0.75pt]  [font=\scriptsize] [align=left] {$\:$i};
    \draw    (640.5, 190.5) circle [x radius= 11.34, y radius= 11.34]   ;
    \draw (635,185) node [anchor=north west][inner sep=0.75pt]  [font=\scriptsize] [align=left] {$\:$s};
    \draw    (654.5, 39.5) circle [x radius= 11.34, y radius= 11.34]   ;
    \draw (649,34) node [anchor=north west][inner sep=0.75pt]  [font=\scriptsize] [align=left] {$\:$s};
    \draw    (639.5, 115.5) circle [x radius= 11.34, y radius= 11.34]   ;
    \draw (634,110) node [anchor=north west][inner sep=0.75pt]  [font=\scriptsize] [align=left] {$\:$s};
    \draw (567,220) node [anchor=north west][inner sep=0.75pt]   [align=left] {$s = \text{ppimississi}$};
    \draw    (576.5, 39.5) circle [x radius= 11.34, y radius= 11.34]   ;
    \draw (571,34) node [anchor=north west][inner sep=0.75pt]  [font=\scriptsize] [align=left] {$\:$p};
    \draw    (602.5, 77.5) circle [x radius= 11.34, y radius= 11.34]   ;
    \draw (597,72) node [anchor=north west][inner sep=0.75pt]  [font=\scriptsize] [align=left] {$\:$i};
    \draw    (590.5, 115.5) circle [x radius= 11.34, y radius= 11.34]   ;
    \draw (585,110) node [anchor=north west][inner sep=0.75pt]  [font=\scriptsize] [align=left] {$\:$p};
    \draw    (683.5, 77.5) circle [x radius= 11.34, y radius= 11.34]   ;
    \draw (678,72) node [anchor=north west][inner sep=0.75pt]  [font=\scriptsize] [align=left] {$\:$i};
    \draw    (671.5, 115.5) circle [x radius= 11.34, y radius= 11.34]   ;
    \draw (666,110) node [anchor=north west][inner sep=0.75pt]  [font=\scriptsize] [align=left] {$\:$s};

    \end{tikzpicture}
    }
    \caption{Cycle-rotation operation: {\ttrotate}$(s,9)$ moves $s[9..]$ to the left of $s[..8]$. After the rotation the string becomes $s[9..]s[..8]$.}
	\label{fig:rotation}
\end{figure}
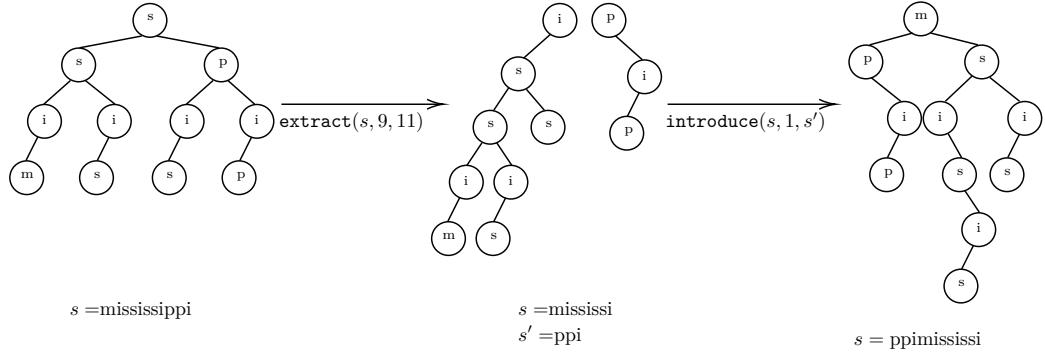
\tikzset{every picture/.style={line width=0.75pt}} 

\section{Other Related Work}
\label{sec:more-related}

    A related line of work aims at maintaining a data structure such that the solution to some particular problem on one or two strings can be efficiently updated when these strings undergo an edit operation (deletion, insertion, or substitution). Examples are longest common factor of two strings~\cite{AmirCIPR17,AmirCPR20}, optimal alignment of two strings~\cite{CharalampopoulosKM20}, approximating the edit distance~\cite{Kociumaka0S23}, longest palindromic substring~\cite{FunakoshiNIBT18}, longest square~\cite{AmirBCK19}, or longest Lyndon factor~\cite{UrabeNIBT18} of one string. The setup can be what is referred to as partially dynamic, when the original string or strings are returned to their state before the edit, or fully dynamic, when the edit operations are reflected on the original string or strings. Clifford et al.~\cite{CliffordGLS18} give lower bounds on various problems of this kind when a single substitution is applied. 
	
	This setup, also referred to as {\em dynamic strings}, differs from ours in several
	ways: (a) we are not only interested in solving one specific problem on strings; (b) we have an entire collection of strings, and will want to ask queries on any one or any pair of these; and (c) we allow many different kinds of update operations. 

 Locally consistent parsings to maintain dynamic strings have been used to support more complex problems, such as simulating suffix arrays~\cite{KempaK22,KempaK22arxiv}.


\section{Circular strings and omega extensions}\label{sec:circular-omega}

\subsection{Additional definitions}
In this section, we are going to use some further concepts regarding periodicity and conjugacy.

A string $s$ is called {\em periodic} with period $r$ if $s[i+r] = s[i]$ for all $1\leq i \leq |s|-r$.

Two strings $s,t$ are {\em conjugates} if there exist strings $u,v$, possibly empty, such that $s=uv$ and $t=vu$. Conjugacy is an equivalence; the equivalence classes $[s]$ are also called {\em circular strings}, and any $t\in [s]$ is called a {\em linearization} of this circular string. Abusing notation, any linear string $s$ can be viewed as a circular string, in which case it is taken as a representative of its conjugacy class. A {\em substring} of a circular string $s$ is any prefix of any $t\in [s]$, or, equivalently, a string of the form $s[i..j]$ for $1\leq i, j\leq |s|$ (a linear substring), or $s[i..]s[..j]$, where $j<i$. A {\em necklace} is a string $s$ with the property that $s\leq_{\lex} t$ for all $t\in [s]$. Every conjugacy class contains exactly one necklace. 

When the dynamic strings in our collection are to be interpreted as circular strings, we need to adjust some of our operations. Our model is that we will maintain a canonical representative $\hat{s}$ of the class of rotations of $s$. All the indices of the operations refer to positions in $\hat{s}$. Internally, we may store in the \FST\ another representative $s$ of the class, not necessarily $\hat{s}$.

\subsection{Circular strings}\label{sec:circular}

Our general approach to handle operations on $\hat{s}$ regarding it as circular is to rotate it conveniently before accessing it. The splay tree $T$ of $\hat{s}$ will then maintain some (string) rotation $s = \hat{s}[r..] \hat{s}[..r-1]$ of $\hat{s}$, and we will maintain a field $\start(\hat{s}) = r$ so that we can map any index $\hat{s}[i]$ referred to in update or query operations to $s[((|s|+i-\start(\hat{s})) \bmod |s|)+1]$.

When we want to change the rotation of $\hat{s}$ to another index $r'$, so that we now store $s' = \hat{s}[r'..] \hat{s}[..r'-1]$, we make use of a new operation $\ttrotate(s,i)$, which rotates $s$ so that its splay tree represents $s[i..]s[..i-1]$. This is implemented as $s' = \extract(s,i,|s|)$ followed by $\introduce(s,1,s')$. 
We then move from rotation $r$ to $r'$ in $\Oh(\log |s|)$ amortized time by doing $\ttrotate(s,r'-r+1)$ if $r'>r$, or $\ttrotate(s,|s|+r'-r+1)$ if $r'<r$. We then set $\start(\hat{s})=r'$.

Operation $s=\makestring(w)$ stays as before, in the understanding that $\hat{s}=w$ will be seen as the canonical representation of the class, so we set $\start(\hat{s})=1$; this can be changed later with a string rotation if desired. All the operations that address a single position $\hat{s}[i]$, like $\access$ and the edit operations, are implemented verbatim by just shifting the index $i$ using $\start(\hat{s})$ as explained. Instead, the operations $\retrieve$, $\extract$, $\equal$, $\reverse$, and $\map$, which act on a range $\hat{s}[i..j]$, may give trouble when $i>j$, as in this case the substring is $\hat{s}[i..] \hat{s}[..j]$ by circularity. In this case, those operations will be preceded by a change of rotation from the current one, $r=\start(\hat{s})$, to $r'=1$, using $\ttrotate$ as explained. This guard will get rid of those cases. Note that, in the case of $\equal$, we may need to rotate both $s_1$ and $s_2$, independently, to compute each of the two signatures.

The two remaining operations deserve some consideration. Operation $\introduce(s_1,i,s_2)$ could be implemented verbatim (with the shifting of $i$), but in this case it would introduce in $\hat{s_1}[i]$ the current rotation of $s_2$, instead of $\hat{s_2}$ as one would expect. Therefore, we precede the operation by a change of rotation in $s_2$ to $r'=1$, which makes the splay tree store $\hat{s_2}$ with $\start(\hat{s_2})=1$. 

Finally, in operation $\lcp(s_1,i_1,s_2,i_2)$ we do not know for how long the LCP will extend, so we precede it by changes of rotations in both $s_1$ and $s_2$ that make them start at position $1$ of $\hat{s_1}$ and $\hat{s_2}$. In case $s_1=s_2$, however, this trick cannot be used. One simple solution is to rotate the string every time we call $\equal$ during Step 1; recall Section~\ref{sec:lcp}. This will be needed as long as the accesses are done on $s_1$ and $s_2$; as soon as we extract the substrings of length $\ell''$ (and, later, $\ell'$ for Step 3), we work only on the extracted strings. While the complexity is preserved, rotating the string every time can be too cumbersome. We can use an alternative way to compute signatures of circular substrings, $\kappa(s[i..]s[..j])$: we compute as in Section~\ref{sec:equal} $\sigma = \kappa(s[i..])$ and $\tau = \kappa(s[..j])$, as well as $b^j\bmod p$, which comes for free with the computation of $\tau$; then $\kappa(s[i..]s[..j]) = (\sigma\cdot b^j + \tau) \bmod p$.

Overall, we maintain for all the operations the same asymptotic running times given in the Introduction when the strings are interpreted as circular.

\subparagraph*{Signed reversals on circular strings.} 
By combining reversals and involutions, we can support signed reversals on circular strings, too. We do this in the same way as for linear strings, namely by doubling the alphabet $\Sigma$ of gene identifiers such that each gene $i$ has a negated version $-i$, and using the involution $f(i)=-i$ (and $f(-i)=i$). Note that the original paper in which reversals were introduced~\cite{WattersonEHM82} used circular chromosomes.

\subsection{Omega extensions}\label{sec:omega}

Circular dynamic strings allow us to implement operations that act on the omega extensions of the underlying strings. Recall that for a (linear) string $s$, the infinite string $s^{\omega}$ is defined as the infinite concatenation $s^{\omega} = s\cdot s \cdot s \cdots$. These are, for example, used in the definition of the {\em extended Burrows-Wheeler Transform} (eBWT) of Mantaci et al.~\cite{MantaciRRS07}, where the underlying string order is based on omega extensions. In this case, comparisons of substrings may need to be made whose length exceeds the shorter of the two strings $s_1$ and $s_2$. We therefore introduce a generalization of circular substrings as follows: $t$ is called an {\em omega-substring} of $s$ if $t = s[i..]s^ks[..j]$ for some $j<i-1$  and $k\geq 0$. Note that the suffix $s[i..]$ and the prefix $s[..j]$ may also be empty. Thus, $t$ is an omega-substring of $s$ if and only if $t = v^kv[..j]$ for some $k\geq 1$ and some conjugate $v$ of $s$.

An important tool in this section will be the 
famous Fine and Wilf Lemma~\cite{lothaire3}, which states that if a string $w$ has two periods $r,q$ and $|w|\geq r+q-\gcd(r,q)$, then $w$ is also periodic with period $\gcd(r,q)$ (a string $s$ is called periodic with period $r$ if $s[i+r] = s[i]$ for all $1\leq i \leq |s|-r$). 
The following is a known corollary, a different formulation of which was proven, e.g., in~\cite{MantaciRRS07}; we reprove it here for completeness. 

\begin{lemma} \label{lem:periodic}
    Let $u,v$ be two strings. If $\mathrm{lcp}(u^{\omega}, v^{\omega}) \geq |u|+|v|-\gcd(|u|,|v|)$, then $u^{\omega} = v^{\omega}$. 
\end{lemma}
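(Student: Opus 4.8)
The plan is to derive the claim directly from the Fine and Wilf Lemma by identifying a sufficiently long common prefix that simultaneously has periods $|u|$ and $|v|$, hence also period $\gcd(|u|,|v|)$, and then to leverage that periodicity to conclude $u^\omega = v^\omega$. Let $r = |u|$, $q = |v|$, $g = \gcd(r,q)$, and set $w$ to be the common prefix of $u^\omega$ and $v^\omega$ of length $r+q-g$ (this exists by hypothesis). Since $w$ is a prefix of $u^\omega$, which is periodic with period $r$, the string $w$ has period $r$; similarly $w$ has period $q$. Because $|w| = r+q-g \geq r+q-\gcd(r,q)$, the Fine and Wilf Lemma applies and $w$ has period $g$.

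Next I would upgrade the periodicity of $w$ to periodicity of the full infinite string. Since $w$ has period $g$ and $g \mid r$, the prefix $w[..r] = u$ has period $g$; likewise $w[..q] = v$ has period $g$. Now I claim $u^\omega$ itself has period $g$: indeed $u^\omega$ is obtained by concatenating copies of $u$, and since $|u| = r$ is a multiple of $g$ and $u$ is $g$-periodic (i.e.\ $u = z^{r/g}$ where $z = u[..g]$), we get $u^\omega = z^\omega$. By the same reasoning $v^\omega = (v[..g])^\omega$. It remains to check that $u[..g] = v[..g]$: both are prefixes of $w$ of length $g \leq r+q-g$ (this inequality holds since $g \le \min(r,q)$ and $r,q \geq g$ gives $r+q-g \geq \max(r,q) \geq g$), and $w$ is a common prefix of $u^\omega$ and $v^\omega$, so $u[..g] = w[..g] = v[..g]$. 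Therefore $u^\omega = (w[..g])^\omega = v^\omega$, as desired.

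The main obstacle, and the place to be careful, is making the passage from ``$w$ has period $g$'' to ``$u$ and $v$ are both powers of the same length-$g$ string'' fully rigorous, in particular verifying that $g$ divides both $r$ and $q$ (immediate from $g = \gcd(r,q)$) and that a $g$-periodic string whose length is a multiple of $g$ is literally a power of its length-$g$ prefix. This is the standard fact that a string of length $m$ with period $d \mid m$ equals $(s[..d])^{m/d}$, which follows by a trivial induction from the defining relation $s[i+d] = s[i]$; I would state it in one line rather than belabor it. Everything else is bookkeeping with the inequality $r+q-g \geq \max(r,q)$ to ensure the relevant prefixes of $w$ are long enough to be read off.
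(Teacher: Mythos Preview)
Your proposal is correct and follows essentially the same route as the paper's proof: take a sufficiently long common prefix, observe it has periods $|u|$ and $|v|$, apply Fine and Wilf to obtain period $g=\gcd(|u|,|v|)$, and conclude that $u$ and $v$ are both powers of the same length-$g$ prefix, whence $u^\omega=v^\omega$. Your write-up is simply more explicit about the bookkeeping (verifying $r+q-g\ge\max(r,q)$ and that a $g$-periodic string of length a multiple of $g$ is a power of its length-$g$ prefix), which the paper leaves implicit.
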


\begin{proof}
    Let $\ell = \mathrm{lcp}(u^{\omega}, v^{\omega}) \geq |u|+|v|-\gcd(|u|,|v|)$. Then the string $t = s_1^{\omega}[..\ell]$ is periodic both with period $|u|$ and with period $|v|$, and thus, by the Fine and Wilf lemma, it is also periodic with period $\gcd(|u|,|v|)$. Since $\gcd(|u|,|v|) \leq |u|,|v|$, this implies that both $u$ and $v$ are powers of the same string $x$, of length $\gcd(|u|,|v|)$ and therefore, $u^{\omega} = x^{\omega} = v^{\omega}$. 
\end{proof}

We further observe that the fingerprint of strings of the form $u^k$ 
can be computed from the fingerprint of string $u$. More precisely, let $u$ be a string, $\pi = \kappa(u)$ its fingerprint, and $k\geq 1$. Then, calling $d = b^{|u|} \bmod p$ (which we also obtain in the field $y\power$ when computing $\kappa(u)$), it holds 
\begin{eqnarray}
\kappa(u^k) &=& (\pi\cdot d^{k-1} + \pi\cdot d^{k-2} + \cdots + \pi\cdot d + \pi ) \bmod p \nonumber\\ 
 &=& (\pi \cdot (d^{k-1} + d^{k-2} + \cdots + 1)) \bmod p, \label{eq:geo}
\end{eqnarray}
where $\geomsum(d,k-1)=(d^{k-1} + d^{k-2} + \cdots + 1) \bmod p$ can be computed in $\Oh(\log k)$ time using the identity $d^{2k+1}+d^{2k}+\cdots+1 = (d+1)\cdot((d^2)^k + (d^2)^{k-1} + \cdots+1)$, as follows\footnote{This technique seems to be folklore. Note that the better known formula $\geomsum(d,k)=((d^{k+1}-1)\cdot(d-1)^{-1}) \bmod p$ requires computing multiplicative inverses, which takes $\Oh(\log N)$ time using the extended Euclid's algorithm, or $\Oh(\log\log N)$ with faster algorithms \cite{SZ04}; those terms would not be absorbed by others in our cost formula.} (all modulo $p$):
\begin{eqnarray}
\geomsum(d,0) &=& 1 \nonumber \\
\geomsum(d,2k+1) &=& (d+1) \cdot \geomsum(d^2,k)  \label{eq:matrix} \\
\geomsum(d,2k) &=& d\cdot \geomsum(d,2k-1) + 1 \nonumber 
\end{eqnarray}

\subparagraph*{Extended substring equality.} We devise at least two ways in which our $\equal$ query can be extended to omega extensions.
First, consider the query $\equal_\omega(s_1,i_1,s_2,i_2,\ell) = \equal(s_1^\omega,i_1,s_2^\omega,i_2,\ell)$, that is, the normal substring equality interpreted on the omega extensions of $s_1$ and $s_2$.
We let $v_1 = \ttrotate(s_1,i_1)$ and $v_2 = \ttrotate(s_2,i_2)$. Then we have $s_1^{\omega}[i_1..i_1+\ell-1] = v_1^{k_1}v_1[..j_1]$, 
where $k_1 = \lfloor\ell/|s_1|\rfloor$ and $j_1 = \ell \bmod |s_1|$. If $k_1=0$, we simply compute $\kappa_1 = \kappa(s_1^{\omega}[i_1..i_1+\ell-1]) = \kappa(v_1[..j_1])$. Otherwise, we compute $\kappa_1 = \kappa(s_1^{\omega}[i_1..i_1+\ell-1])$ by applying Eq.~\eqref{eq:geo} as follows: 
\begin{eqnarray} \label{eq:powers}
\kappa_1 = (\kappa(v_1) \cdot (d^{k_1-1}+\cdots+1)\cdot b^{j_1} + \kappa(v_1[..j_1])) \bmod p.  
\end{eqnarray}

There are various components to compute in this formula apart from the fingerprints themselves. First, note that
$d = b^{|s_1|} \bmod p = b^{|v_1|} \bmod p = \root(T_1)\power$ for the tree $T_1$ of $s_1$ (or $v_1$), so we have it in constant time. Second, $b^{j_1} \bmod p$ is the field $y\power$ after we compute $\kappa(v_1[..j_1])$ via $y=\isolate(v_1,1,j_1)$ after completion of \ttrotate($s_1,i_1$), thus we also have it in constant time.
Third, 
$d^{k_1-1}+\cdots+1 = \geomsum(d,k_1-1)$ is computed with Eq.~(\ref{eq:matrix}) in time $\Oh(\log k_1) \subseteq \Oh(\log\ell)$.

By Lemma~\ref{lem:periodic} we can define $\ell_\omega=|s_1|+|s_2|$ and, if $\ell \geq \ell_\omega$, run the $\equal_\omega$ 
query with $\ell_\omega$ instead of $\ell$. The lemma shows that $s_1[i_1..i_1+\ell-1] = s_2[i_2..i_2+\ell-1]$ iff $s_1[i_1..i_1+\ell_\omega-1] = s_2[i_2..i_2+\ell_\omega-1]$. This limits $\ell$ to $|s_1|+|s_2|$ in our query and therefore the cost $\Oh(\log\ell)$ is in $\Oh(\log |s_1s_2|)$.

We compute $\kappa_2$ analogously, and return {\em true} if and only if $\kappa_1 = \kappa_2$, after undoing the rotations to get back the original strings $s_1$ and $s_2$. The total amortized time for operation $\equal_\omega$ is then $\Oh(\log |s_1s_2|)$.
Note that our results still hold whp because we are deciding on fingerprints of strings of length $\Oh(N)$, not $\Oh(\ell)$ (which is in principle unbounded).

A second extension of $\equal$ is $\equal_\omega^\omega(s_1,i_1,\ell_1,s_2,i_2,\ell_2)$, interpreted as $(s_1^\omega[i_1..i_1+\ell_1-1])^\omega = (s_2^\omega[i_2..i_2+\ell_2-1])^\omega$, that is, the omega extension of $s_1^\omega[i_1..i_1+\ell_1-1]$ is equal to the omega extension of $s_2^\omega[i_2..i_2+\ell_2-1]$. By Lemma~\ref{lem:periodic}, this is equivalent to $(s_1^\omega[i_1..i_1+\ell_1-1])^{\ell_2} = (s_2^\omega[i_2..i_2+\ell_2-1])^{\ell_1}$. So we first compute $\kappa_1=\kappa(s_1^\omega[i_1..i_1+\ell_1-1])$ and $\kappa_2=\kappa(s_2^\omega[i_2..i_2+\ell_2-1])$ as above, compute $d_1 = b^{\ell_1} \bmod p$ and $d_2 = b^{\ell_2} \bmod p$, and then return whether $(\kappa_1 \cdot (d_1^{\ell_2-1}+\cdots+1)) \bmod p =
(\kappa_2 \cdot (d_2^{\ell_1-1}+\cdots+1)) \bmod p$.
Operation $\equal_\omega^\omega$ is then also computed in amortized time $\Oh(\log |s_1s_2|)$.

\subparagraph*{Extended longest common prefix.}
We are also able to extend LCPs to omega extensions: operation $\lcp_\omega(s_1,i_1,s_2,i_2)$ computes, for the corresponding rotations $v_1=\ttrotate(s_1,i_1)$ and $v_2=\ttrotate(s_2,i_2)$, the longest common prefix length lcp($v_1^{\omega},v_2^{\omega}$), as well as the lexicographic order of $v_1^{\omega}$ and $v_2^{\omega}$. That this can be done efficiently follows again from Lemma~\ref{lem:periodic}. We first compare their omega-substrings of length $\ell_\omega = |s_1|+|s_2|$. If $\equal_\omega(s_1,i_1,s_2,i_2,\ell_\omega)$ answers \textit{true}, then it follows that $\lcp(s_1,i_1,s_2,i_2)$ is $\infty$. Otherwise, we run a close variant of the algorithm described in Section~\ref{sec:lcp}; note that $\ell_\omega$ can be considerably larger than one of $s_1$ or $s_2$. For Step 1, we define $n'=n=|s_1s_2|$; the other formulas do not change. We run the $\equal_\omega$ computations on $s_1$ and $s_2$ using Eq.~\eqref{eq:powers} to compute the fingerprints.
We extract the substrings of length $\ell'$ in Step 3 (analogously, $\ell''$ in Step 1) using the $\extract$ for circular strings, but do so only if $\ell' \le |s_1|$ (resp., $\ell' \le |s_2|)$; otherwise we keep accessing the original string using Eq.~\eqref{eq:powers}. 
The total amortized time to compute LCPs on omega extensions is thus
$\Oh(\log |s_1s_2|)$.

\subsection{Future work}
One feature that we would like to add to our data structure is allowing identification of conjugates. The rationale behind this is that a circular string can be represented by any of its linearizations, so these should all be regarded as equivalent. Furthermore, when the collection contains several conjugates of the same string, then this may be just an artifact caused by the data acquisition process. 

This could be solved by replacing each circular string with its necklace representative, that is, the unique conjugate that is lexicographically minimal in the conjugacy class, before applying \makestring; this representative is computable in linear time in the string length~\cite{lothaire3}. However, updates can change the lexicographic relationship of the rotations, and thus the necklace representative of the conjugacy class. 
Recomputing the necklace rotation of $s$ after each update would add worst-case $\Oh(|s|)$ time to our running times, which is not acceptable. 
Computing the necklace rotation after an edit operation, or more in general, after any one of our update operations, is an interesting research question, which to the best of our knowledge has not yet been addressed. 

\end{document}